\newcommand{\s}{\mathrm{s}}
\newcommand{\PP}{\mathbb{P}}
\newcommand{\EE}{\mathbb{E}}
\newcommand{\RR}{\mathbb{R}}
\newcommand{\Ps}{\mathbb{P}_{\mathrm{s}}}
\newcommand{\Pout}{\mathbb{P}_{\mathrm{out}}}
\newcommand{\Pscan}{\mathbb{P}_{\mathrm{s,can}}}
\newcommand{\Psic}{\mathbb{P}_{\mathrm{s, IC}}}
\newcommand{\lm}{\lambda_{\mathrm{m}}}
\newcommand{\lk}{\lambda_{\mathrm{k}}}
\newcommand{\leqi}{\lambda_{\mathrm{eq}}}
\newcommand{\etat}{\eta_{\mathrm{t}}}
\newcommand{\muj}{\mu_j}
\newcommand{\pak}{p_{\mathrm{a},k}}
\newcommand{\pare}{p_{\mathrm{a}, k}^{\mathrm{(RE)}}}
\newcommand{\borRE}{C_k^{\mathrm{(RE)}}}
\newcommand{\Ri}{R_{\mathrm{I},n}}
\newcommand{\Prc}{\mathbb{P}_{\mathrm{c}}}
\newcommand{\PrcMax}{\mathbb{P}_{\mathrm{c}}^{(\text{MAX SIR})}}
\newcommand{\PrcMin}{\mathbb{P}_{\mathrm{c}}^{(\mathrm{MINL})}}
\newcommand{\be}{\begin{equation}}
\newcommand{\ee}{\end{equation}}
\newcommand{\bea}{\begin{eqnarray}}
\newcommand{\eea}{\end{eqnarray}}
\newtheorem{theorem}{Theorem}
\newtheorem{lemma}{Lemma}
\newtheorem{remark}{Remark}
\newacro{JD}{joint detection}
\newacro{IC}{interference cancellation}
\newacro{SIC}{successive interference cancellation}
\newacro{SIR}{signal-to-interference ratio}
\newacro{SoI}{signal of interest}
\newacro{MAC}{medium access control}
\newacro{DL}{downlink}
\newacro{UL}{uplink}
\newacro{PMF}{probability mass function}
\def\BibTeX{{\rm B\kern-.05em{\sc i\kern-.025em b}\kern-.08em
    T\kern-.1667em\lower.7ex\hbox{E}\kern-.125emX}}
\begin{document}
\title{Successive Interference Cancellation in Heterogeneous Cellular Networks}


\author{Matthias Wildemeersch,~\IEEEmembership{Student Member,~IEEE,} Tony Q.~S.~Quek,~\IEEEmembership{Senior Member,~IEEE,} Marios Kountouris,~\IEEEmembership{Member,~IEEE,} Alberto Rabbachin,~\IEEEmembership{Member,~IEEE}  and Cornelis H. Slump, ~\IEEEmembership{Member,~IEEE} \\
\thanks{The material in this paper has been presented in part at the IEEE International Workshop on Signal Processing Advances in Wireless Communications (SPAWC), Darmstadt, Germany, June 2013 \cite{WilQueKouSlu_SPAWC13}.}
\thanks{M. Wildemeersch is with the Signals and Systems Group, University of Twente, the Netherlands, and the Institute for Infocomm Research, A$^{\ast}$STAR, Singapore (e-mail: \tt{stuwma@i2r.a-star.edu.sg}).}
\thanks{T.~Q.~S. Quek is with the Singapore University of Technology and Design, and the Institute for Infocomm Research (e-mail: \tt{qsquek@ieee.org}).}
\thanks{M. Kountouris is with the Department of Telecommunications, SUPELEC (Ecole Sup\'{e}rieure d'Electricit\'{e}), Gif-sur-Yvette, France (e-mail: \tt{marios.kountouris@supelec.fr}).}
\thanks{A. Rabbachin is with the Laboratory for Information
and Decision Systems (LIDS), Massachusetts Institute of Technology, Cambridge, USA (e-mail: \tt{rabalb@mit.edu}).}
\thanks{C. H. Slump is with the Signals and Systems Group, University of Twente, Enschede, the Netherlands (e-mail: \tt{C.H.Slump@utwente.nl}).}
}

\maketitle

\vspace{-15mm}
\begin{abstract}
At present, operators address the explosive growth of mobile data demand by densification of the cellular network so as to reduce the transmitter-receiver distance and to achieve higher spectral efficiency. Due to such network densification and the intense proliferation of wireless devices, modern wireless networks are interference-limited, which motivates the use of interference mitigation and coordination techniques. In this work, we develop a statistical framework to evaluate the performance of multi-tier heterogeneous networks with  \ac{SIC} capabilities, accounting for the computational complexity of the cancellation scheme and relevant network related parameters such as random location of the access points (APs) and mobile users, and the characteristics of the wireless propagation channel. We explicitly model the consecutive events of canceling interferers and we derive the success probability to cancel the $n$-th strongest signal and to decode the signal of interest after $n$ cancellations. When users are connected to the AP which provides the maximum average received signal power, the analysis indicates that the performance gains of SIC diminish quickly with $n$ and the benefits are modest for realistic values of the \ac{SIR}. We extend the statistical model to include several association policies where distinct gains of SIC are expected: (i) minimum load association, (ii) maximum instantaneous SIR association, and (iii) range expansion. Numerical results show the effectiveness of SIC for the considered association policies. This work deepens the understanding of SIC by defining the achievable gains for different association policies in multi-tier heterogeneous networks.
\end{abstract}

\begin{keywords}
Successive interference cancellation, multi-tier heterogeneous network, stochastic geometry, association policy
\end{keywords}


\section{Introduction}
Small cell networks are an important trend in current wireless networks, which increase the density of transmitters and result in interference-limited networks where the thermal noise is negligible with respect to the interference \cite{QueRocGuvKou_Cam13, WilQueRabSlu_TC2013}. The motivation of small cell networks stems from the idea to  reduce the distance between transmitter and receiver by deploying additional base stations as to increase the spectral efficiency. Yet, as the network interference is the most important obstacle for successful communication, effective interference management schemes are essential to further enhance the performance of dense networks. These mechanisms impose the orthogonality between transmitted signals in frequency, time, or space, and include adaptive spectrum allocation policies \cite{CheQueKou_JSAC2012, LopGuvRocKouQueZha_WComMag11}, \ac{MAC} schemes, spatial interference mitigation by means of zero-forcing beamforming \cite{KouAnd_TWC2012}, and signal processing algorithms usually referred to as \ac{IC} techniques \cite{verdu_1998, WebAndYanVec_TIT2007, WebAnd_NOW2012, HuaLauChe_JSAC2009, MorLoy_JSAC2009, HasAloBasEbb_TWC2003, BloJin_ICC2009, NguJeoQueTayShi_TWC13}.  

Signal processing techniques such as \ac{JD} \cite{verdu_1998} or \ac{SIC} reduce the interference power by decoding and canceling interfering signals. In this work, we focus on the SIC receiver which decodes signals according to descending signal power and subtracts the decoded signal from the received multi-user signal, so as to improve the \acf{SIR}. The process is repeated until the \acf{SoI} is decoded. A common approach in literature is to consider an exclusion region around the receiver. In \cite{Ina_TSP2012}, a Gaussian approximation is proposed for the sum of interfering signals, while \cite{WenLoyYon_JSAC12} performs an asymptotic analysis of the interference distribution in cognitive radio networks. The methodology based on the exclusion region leads to the definition of lower and upper bounds of the outage probability. For instance, in \cite{WebAndYanVec_TIT2007, WebAnd_NOW2012}, a stochastic geometric model is adopted to capture the spatial distribution of the interfering nodes, accounting for cancellation and decoding errors. The key idea of this work is the division into near field and far field interferers, where every near field interferer is able to cause outage at the reference receiver.  Building on this work, \cite{HuaLauChe_JSAC2009, MorLoy_JSAC2009} propose bounds of the outage/success probability including the effects of the fading channel, while \cite{BloJin_ICC2009} includes accurately the consecutive steps of the SIC scheme. None of these works concerns a specific cancellation technique, since the order statistics of the received signal power are disregarded, which is an essential aspect in the analysis of SIC. The ordering of the received signal power depends on the transmission power of the network nodes, the spatial distribution of the active transmitters, and the propagation channel conditions. Specifically, the inclusion of SIC in the network performance analysis requires the characterization of the fundamental information-theoretic metric, the SIR, and to model the network interference as a trimmed sum of order statistics. In \cite{HasAloBasEbb_TWC2003}, closed-form expressions are presented for the outage probability accounting for the order statistics, assuming that all interferers are at the same distance from the intended receiver, while \cite{BenCarGag_TC2010} derives a lower bound of the outage probability based on the order statistics of the strongest uncanceled and partially canceled signals accounting for distance and fading. Very recently, some excellent contributions can be found that explicitly include both the topology and the fading effects in the description of sum of order statistics of the received signal power \cite{ZhaHae_Globecom2012, ZhaHae_Asilomar13, ZhaHae_ISIT13, KeeBlaKar_ISIT13}, yet, these models limit the analysis to single-tier networks.

A unified approach to describe the performance of \ac{SIC}, which jointly accounts for the interference cancellation scheme, network topology, channel fading, and the specific aspects of multi-tier networks, is still evasive. In this work, we develop an analytical framework that describes the success probability for transmissions in multi-tier networks with \ac{SIC} capabilities. The main contributions of this work are listed as follows.
\begin{itemize}
\item We derive the probability of successfully canceling the $n$-th interferer and show that the order statistics of the received signal power are dominated by path loss attenuation. We demonstrate how the effectiveness of the SIC scheme depends on the path loss exponent, the density of users and APs, and the maximum number of cancellations. 
\item We provide an analytic framework that accommodates for the heterogeneity that characterizes future wireless networks. To this end, we include different association policies for multi-tier networks, for which SIC yields distinct performance gains. In particular, we include the minimum load association policy, maximum instantaneous SIR policy, and range expansion in the analysis.
\end{itemize}
The proposed framework accounts for all essential network parameters and provides insight in the achievable gains of SIC in multi-tier heterogeneous networks. The minimum load policy can be used to enhance the feasibility of load balancing \cite{YeRonCheAlsCarAnd_TWC13}, the maximum instantaneous SIR policy is relevant to define the achievable capacity in multi-tier networks \cite{DhiGanBacAnd_JSAC12}, and range expansion with SIC capabilities can allow for efficient traffic offloading \cite{SinDhiAnd_TWC13}.

The remainder of the paper is organized as follows. In Section \ref{sec.SysMod}, the system model is introduced. In Section \ref{sec.SIC}, the success probability of transmissions in multi-tier heterogeneous networks with SIC capabilities is defined. In Section \ref{sec.SICBen}, several association policies different from the maximum long-term SINR policy are introduced, where SIC yields a substantial increase of the coverage probability or the rate distribution. Numerical results are presented in Section \ref{sec.NumRes} and conclusions are drawn in Section \ref{sec.Con}.

\section{System Model} \label{sec.SysMod}
We consider a multi-tier heterogeneous network composed of $K$ tiers. For every tier $k \in \mathcal{K}=\{1,...,K \}$, the access points (APs) are distributed according to a homogeneous Poisson point process (PPP) $\Phi_k$ in the Euclidean plane with density $\lk$ such that $\Phi_k \sim \text{PPP}(\lk)$. While it is natural to use the Poisson model as the underlying spatial stochastic process for irregularly deployed APs such as picocells and femtocells, modeling the location of regularly deployed macrocell base stations (MBSs) by means of a PPP has been empirically validated and yields conservative bounds on the network performance \cite{AndBacGan_TC2011}. More recently, also theoretical evidence has been given for modeling the deterministic locations of MBSs by means of a PPP, provided there is sufficiently strong log-normal shadowing \cite{BlaKarKee_INFOCOM13}. All APs apply an open access (OA) policy, such that users can be served by each AP of each tier. The mobile users are spatially distributed as $\Psi \sim \text{PPP}(\mu)$ over $\mathbb{R}^2$. Each AP of tier $k$ transmits with power $P_k$ over the total bandwidth $W$. The total available spectrum $W$ is divided in subchannels by aggregating a fixed number of consecutive subcarriers of bandwidth $B$, such that the total number of available subchannels equals $\lfloor W/B \rfloor$.\footnote{Without loss of generality, we assume $B = 1$.} We denote the subchannel index as $j$, where $j \in \mathcal{J} = \{ 1,2,\hdots,\lfloor W/B \rfloor\}$. In order to maximize frequency reuse and throughput, each AP has access to the entire available spectrum. We represent the $i$-th AP of tier $k$ as $x_{k,i}$. Hence, denoting the available channels of $x_{k,i}$ as $\mathcal{J}^{(x_{k,i})}$, we have $\mathcal{J}^{(x_{k,i})} = \mathcal{J}, \enspace \forall i, k$. A user receives a signal from $x_{k,i}$ with signal power $P_k h_u g_{\alpha}(u - x_{k,i})$, where $h_u$ represents the power fading coefficient for the link between the user $u$ and $x_{k,i}$, and $g_{\alpha}(x) =\| x\|^{-\alpha} $ is the power path loss function with path loss exponent $\alpha$. For notational convenience, $u$ and $x$ will be used to denote network nodes as well as their location. The association of a user to $x_{k,i}$ is based on the following association metric
\be \label{eq.assMax}
x_{k,i} = \arg \, \max_{k,i}\, A_k \| u-x_{k,i} \|^{-\alpha}, 
\ee
where $A_k$ represents the association rule. For all $A_k = 1$, the user is associated to the nearest base station. For $A_k = P_k$, the association is based on the maximum average received signal power, where the averaging is done with respect to the fading parameter $h$.\footnote{The association rule can further be adjusted to accommodate for cell range expansion by defining $A_k = b_k P_k$, where $b_k$ represents an association bias for tier $k$.} Using this association rule, the set of APs forms a multiplicatively weighted Voronoi tessellation on the two dimensional plane, where each cell $C_{k,i}$ consists of those points which have a higher average received signal power from $x_{k,i}$ than from any other AP, as depicted in Fig. \ref{fig.Voronoi}. 
\begin{figure}[t!]
\centering
\includegraphics[width = 1 \linewidth] {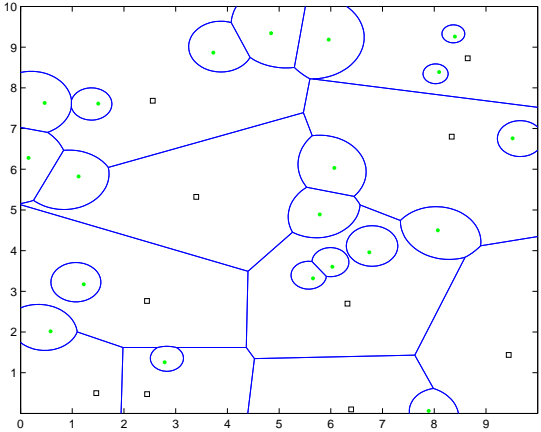}
\caption{Multiplicatively weighted Voronoi tessellation for a two-tier network. } 
\label{fig.Voronoi}
\end{figure}
Formally, we define the cells as
\begin{multline}
C_{k,i} = \lbrace y \in \RR^2 \enspace|\enspace \|y-x_{k,i}\| \leq (A_k/A_l)^{1/\alpha}\|y-x\|, \enspace \\
\forall \,x \in  \Phi_l \backslash\{x_{k,i}\}, \, l \in \mathcal{K}\rbrace .
\end{multline}
According to the association rule in \eqref{eq.assMax}, users will connect to different tiers and the density of users connected to tier $k$ is given by $\mu_k$. Considering a $K$-tier network, each tier $k \in \mathcal{K}$ is characterized by the set consisting of the \ac{UL} transmission power, \ac{DL} transmission power, AP density, and associated user density $\lbrace Q_k, P_k, \lambda_k, \mu_k \rbrace$. The sets of transmission powers and densities are denoted as $\mathbf{Q} = \lbrace Q_1, \hdots, Q_K \rbrace$, $\mathbf{P}= \lbrace P_1, \hdots, P_K \rbrace$, $\boldsymbol{\lambda}= \lbrace \lambda_1, \hdots, \lambda_K \rbrace$, and $\boldsymbol{\mu} = \lbrace \mu_1, \hdots, \mu_K \rbrace$ , respectively. Within a Voronoi cell, mobile users are independently and uniformly distributed over the cell area. Fairness between users is accomplished by proportional allocation of the time and frequency resources. We consider an orthogonal multiple access scheme, which ensures that at any given time and channel, only a single user per cell is active. As the number of users per Voronoi cell active on channel $j$ is restricted to one, the multiple access scheme introduces coupling between the locations of mobile users and APs. It can be shown that this dependence has negligible effects on the performance analysis, and in the sequel we will therefore assume independent PPPs to maintain the tractability of the system model \cite{DhiNovAnd_Globecom2012}. As interference dominates noise in modern cellular networks, we consider the network to be interference-limited. For the link between user $u$ and base station $x_{k,i}$, we define the signal-to-interference ratio (SIR) on channel $j$ in \ac{DL} as
\be
\text{SIR}_j(x_{k,i} \to u) = \frac{P_k h_u g_{\alpha}(x_{k,i}-u)}{\sum_{k \in \mathcal{K}} \sum_{v \in \Phi_{k,j}\backslash \{x_{k,i}\}} P_k h_v g_{\alpha}(v-u)} .
\ee
Let $\Phi_{k,j}$ and $\Psi_{k,j}$ denote the network nodes active on channel $j$ in tier $k$ for the APs and mobile users, respectively. A transmission is successful if the SIR of the intended link exceeds a prescribed threshold $\etat$, which reflects the required quality-of-service (QoS) in terms of transmission rate. Hence, the success probability can be written as $\PP_\s(\etat) = \Pr\{\text{SIR}_j(x_{k,i} \to u) \geq \etat\}$. 

\section{Successive Interference Cancellation} \label{sec.SIC}
\newcounter{MYtempeqncnt}
\begin{figure*}[!t]
\normalsize
\setcounter{MYtempeqncnt}{\value{equation}}
\setcounter{equation}{8}
\be \label{eq.Psic}
\Psic (\etat, n) \approx  \int_{\Ri}^{\infty}\exp \left( -\pi  \muj \etat^{2/\alpha} u^2 C(\Ri^2/(\etat^{2/\alpha} u^2), \alpha)\right)  2\pi \leqi u \exp(-\leqi \pi u^2) \mathrm{d}u, 
\ee
\hrulefill
\setcounter{equation}{\value{MYtempeqncnt}}
\end{figure*}
In this section, we study how \ac{SIC} affects the success probability in multi-tier heterogeneous networks where the association policy is based on the maximum average received signal power. The analysis will be presented for \ac{UL} transmissions but the results similarly apply for \ac{DL} transmissions. This choice is motivated by the higher computational capabilities of APs in comparison with the mobile nodes, which is essential to successfully implement advanced signal processing techniques, as well as power consumption considerations which are less restrictive for APs. The concept of SIC is to decode the strongest signal and subtract it from the incoming signal which yields an increase of the SIR. In the analysis, we explicitly model the sequence of events in the cancellation process. We define the success probability as a function of the threshold, the number of canceled interferers, and all relevant system parameters such as the interferer density, transmission power, path loss exponent, and channel fading. 

Owing to constraints on computational complexity and delay, the number of interferers that can be canceled is limited to $N \in \mathbb{N}$. The AP with SIC capabilities attempts first to decode the \ac{SoI} without any interference cancellation. If an outage occurs, the AP seeks to decode the strongest signal, subtract it from the incoming signal, and performs a new attempt to decode the \ac{SoI} \cite{BloJin_ICC2009}. The received signal power at the typical AP can be ordered as $\{X_{(1)}, X_{(2)}, ...\}$ such that $X_{(i)} \geq X_{(j)}$, with $i \leq j$ and $X_{(i)} = Q_i h_i v_i^{-\alpha}$. The same actions are repeated until the SoI is decoded while satisfying the constraint on the maximum number of cancellations. Hence, UL transmission is successful as long as one of the following events is successful
\begin{align} \nonumber \label{eq.event}
0:  & \left( \frac{Q_u h_u u^{-\alpha}}{I_{\Omega_j^0}} \geq \etat \right) \\ \nonumber
1:  & \left(\frac{Q_u h_u u^{-\alpha}}{I_{\Omega_j^0}} < \etat \right)  \bigcap  \left( \frac{X_{(1)}}{I_{\Omega_j^1}} \geq \etat \right)   \bigcap  \left( \frac{Q_u h_u u^{-\alpha}}{I_{\Omega_j^1}} \geq \etat \right) \\ \nonumber
& \vdots \\ \nonumber
N:  &  \left(\bigcap_{n = 0}^{N-1}\frac{Q_u h_u u^{-\alpha}}{I_{\Omega_j^n}}< \etat \right) \enspace\bigcap\enspace \left(\bigcap_{n=1}^{N} \frac{X_{(n)}}{I_{\Omega_j^n} } \geq \etat \right) \enspace \\ 
& \bigcap \enspace \left( \frac{Q_u h_u u^{-\alpha}}{I_{\Omega_j^N}} \geq \etat \right), 
\end{align}
where the set of interferers on subchannel $j$ after cancellation of the $n$ strongest interferers is represented by $\Omega_j^n = \Omega_j \backslash \{X_{(1)}, \hdots, X_{(n)} \}$, and $\Omega_j = \cup_{k\in \mathcal{K}}\Psi_{k,j} \backslash \lbrace u \rbrace$. The aggregate interference after cancellation is given by
\be \label{eq.aggIntSIC}
I_{\Omega_j^n} = \sum_{i = n+1}^{\infty} X_{(i)} .
\ee 
The first and third factor in the $n$-th event of \eqref{eq.event} represent outage and success for decoding the SoI when $n-1$ and $n$ interferers are canceled, respectively. The second factor in the $n$-th event of \eqref{eq.event} represents the event of successfully canceling the $n$-th interferer. 

The cancellation order is based on the received signal power and is independent of the tier to which the interferers belong. Since $X_{(i)}$ can originate from different tiers with different transmission power $Q_i$, \eqref{eq.aggIntSIC} represents the sum of order statistics, where the transmission power, fading parameter, and path loss component are random variables (r.v.'s). Recent work shows that a generic heterogeneous multi-tier network can be represented by a single-tier network where all system parameters such as the transmission power, fading parameter, and path loss exponent are set to constants, while the determinative parameter is an isotropic (possibly non-homogeneous) AP density \cite{BlaKee_PIMRC13}. For a constant path loss exponent, the isotropic density of the equivalent network reduces to a homogeneous value, as such generalizing previous results where the dispersion of the aggregate interference depends on a single moment of the transmission power and the fading distribution \cite{WinPinShe_ProcIEEE2009}. The stochastic equivalence between multi-tier and single-tier networks allows for the comparison between apparently different networks in a simplified framework. For the performance evaluation of SIC, we will relax the system model to the stochastically equivalent single-tier network with density given by $\leqi = \sum_{k \in \mathcal{K}} \lambda_k P_k^{2/\alpha}$, which follows from Campbell's theorem \cite{kingman}, and where the transmission power is equal to one.\footnote{Note that we assume in this work a constant path loss exponent $\alpha$ for all tiers.} We indicate the density of the mobile users active on channel $j$ by $\muj$. In the following sections, different lemma's are formulated, which define the probability of successfully decoding the SoI after canceling $n$ interferers and the success probability of canceling the $n$-th interferer. 

\subsection{Decoding after interference cancellation}
We define the success probability of a link in an interference-limited network after successfully canceling $n$ interferers as
\begin{align} \label{eq.PsicOrdSta} \nonumber
\Psic (\etat, n) &= \Pr \left [\sum_{i=n+1}^{\infty}X_{(i)} <  h_u u^{-\alpha}/\etat \right] \\
& = \Pr \left[I_{\Omega_j^n} < h_u u^{-\alpha}/\etat\right].
\end{align}
Note that the calculation of the success probability for decoding the SoI is based on the distribution of the sum of order statistics, which requires the joint distribution of infinitely many r.v.'s. There are several possibilities to handle this problem, which have in common to limit the summation of order statistics to the $M$ strongest interferers, generally denoted as a trimmed sum. Since order statistics are mutually dependent, the cumulative distribution function (CDF) of their sum is hard to characterize. The CDF of the sum of order statistics can be found by the inverse Laplace transform of the moment generating function (MGF). It can be shown that the transformation of the order statistics of a set of exponentially distributed r.v.'s to the spacing between the order statistics results in independent r.v.'s, which alleviates the complexity to calculate the MGF \cite{DavNag_OrdSta2003}. Neglecting the topology and assuming that all interferers are at the same distance from the receiver, a closed form solution of the CDF can be found \cite{HasAloBasEbb_TWC2003}. Likewise, the difference of the square of consecutive distances follows an exponential distribution, i.e. $R_{(i)}^2 - R_{(i-1)}^2 \sim \exp(\pi \lambda)$ with $R_{(i)}^2$ the ordered Euclidean distance in $\RR^2$ of the interferers with respect to the receiver and $\lambda$ the intensity of the Poisson process \cite{Hae_TIT2005}. Hence, including the topology and neglecting fading effects, a closed form solution can be reached similarly. However, the computational complexity to calculate the CDF of a trimmed sum of order statistics including  both topology and fading is prohibitive. An alternative promising approach to characterize the sum of order statistics is to consider the asymptotic distribution of the sum $T_M(m,k) = \sum_{i = m}^{M-k}X_{(i)}$, where a fixed number $k \geq 0$ of the smallest values and a fixed number $m \geq 0$ of the largest values is trimmed, denominated as lightly trimmed sums \cite{DavNag_OrdSta2003, CsoHaeMas_AdvAppMath88, csorgo1988}. The main result of \cite{CsoHaeMas_AdvAppMath88} is based on the quantile function $Q(s)$, defined as the inverse of the CDF of the considered r.v., but no closed form solution can be found for the considered scenario. 

In our model, we include the effects of both fading and topology, yet, we assume that the order statistics are dominated by the distance. As to the received signal power in the equivalent single-tier network, the order statistics of the distance outweigh the fading effects, which have an effect on a much shorter time scale. There is a more formal motivation why we assume that the order statistics are dominated by the distance rather than the fading distribution. In the following lemma, we define the distribution of the received signal power.
\begin{lemma} \label{lem.Par}
The distribution of $Y = h X^{-\alpha}$ where $h \sim \exp(\lambda)$ follows a Pareto distribution with CDF
\be \label{eq.CDFZ}
F_Y(y) = 1 - \Gamma\left(\frac{2}{\alpha}+1\right) \frac{y^{-2/\alpha}}{R^2}
\ee
where $R$ is the maximum considered range for the position of interferers.
\end{lemma}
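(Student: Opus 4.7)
The plan is to compute the complementary CDF $\Pr(Y > y) = \Pr(h X^{-\alpha} > y)$ by conditioning on $X$, then pass to the regime in which the Pareto tail emerges. Since the interferers are taken to lie uniformly in a disk of radius $R$, the distance $X$ of a typical interferer has density $f_X(x) = 2x/R^2$ on $[0,R]$. Because $h$ is exponentially distributed with rate $\lambda$, conditioning gives $\Pr(h > y x^{\alpha} \mid X = x) = e^{-\lambda y x^{\alpha}}$, and I would start from
\begin{equation*}
\Pr(Y > y) \;=\; \int_0^R e^{-\lambda y x^{\alpha}}\,\frac{2x}{R^2}\,\mathrm{d}x.
\end{equation*}

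Next I would perform the substitution $u = \lambda y x^{\alpha}$ (so $x = (u/(\lambda y))^{1/\alpha}$), which turns the Jacobian factor and the monomial $2x\,\mathrm{d}x$ into $(2/\alpha)(\lambda y)^{-2/\alpha}u^{2/\alpha-1}\,\mathrm{d}u$. This reduces the integral to a lower incomplete gamma function,
\begin{equation*}
\Pr(Y > y) \;=\; \frac{2}{\alpha R^2}\,(\lambda y)^{-2/\alpha}\,\gamma\!\left(\tfrac{2}{\alpha},\,\lambda y R^{\alpha}\right),
\end{equation*}
after which I would pass to the tail regime $\lambda y R^{\alpha} \gg 1$ so that $\gamma(2/\alpha,\lambda y R^{\alpha}) \to \Gamma(2/\alpha)$. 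Applying the identity $(2/\alpha)\Gamma(2/\alpha) = \Gamma(2/\alpha+1)$ collapses the prefactor, and setting $\lambda=1$ (consistent with the unit-mean Rayleigh fading used throughout the paper) leaves
\begin{equation*}
\Pr(Y > y) \;\approx\; \frac{\Gamma(2/\alpha+1)}{R^2}\,y^{-2/\alpha},
\end{equation*}
whose complement is the announced Pareto CDF.

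The main obstacle, in my view, is that the lemma is stated as an exact equality even though the clean closed form in fact requires replacing the incomplete gamma function by its limit $\Gamma(2/\alpha)$. I would therefore spend a line making explicit that the approximation is controlled by the boundary term $\Gamma(2/\alpha)-\gamma(2/\alpha,\lambda y R^\alpha)$, which decays super-polynomially in $R^\alpha$, and that this is precisely the regime relevant to the order-statistics analysis that follows, since the tail of $Y$ is driven by the small-distance interferers for which the cap at $R$ is immaterial. I would also note that the restriction $\lambda=1$ is harmless: a general rate $\lambda$ merely rescales $y \mapsto \lambda y$, so the Pareto shape is unchanged. With these observations the lemma follows directly from the substitution above.
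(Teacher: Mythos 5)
Your proof is correct, and it reaches the paper's result by a transposed but genuinely more careful route. The paper conditions on the fading variable: it first derives the Pareto CDF $F_W(w) = 1 - w^{-2/\alpha}/R^2$ for $W = X^{-\alpha}$ and then computes $F_Y(y) = \mathbb{E}_h[F_W(y/h)] = \int_0^\infty \bigl(1 - (y/h)^{-2/\alpha}/R^2\bigr)e^{-h}\,\mathrm{d}h$, which produces $\Gamma(2/\alpha+1)$ via $\int_0^\infty h^{2/\alpha}e^{-h}\,\mathrm{d}h$. Crucially, that integral extends the formula for $F_W$ outside its domain of validity: $F_W(y/h)$ is actually $0$ (not negative, as the formula would give) for $h > yR^\alpha$, so the paper's seemingly exact equality silently incorporates the very approximation you isolate. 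Your route --- conditioning on the distance $X$ instead, obtaining the exact CCDF $\tfrac{2}{\alpha R^2}(\lambda y)^{-2/\alpha}\gamma\bigl(\tfrac{2}{\alpha},\lambda y R^\alpha\bigr)$, and only then replacing $\gamma(2/\alpha,\cdot)$ by $\Gamma(2/\alpha)$ --- makes the hidden step explicit and controls the error by the upper incomplete gamma $\Gamma(2/\alpha,\lambda y R^\alpha) \sim (\lambda y R^\alpha)^{2/\alpha-1}e^{-\lambda y R^\alpha}$, confirming your ``super-polynomial in $R^\alpha$'' claim. As a consistency check, the recurrence $\gamma(s+1,z) = s\,\gamma(s,z) - z^s e^{-z}$ shows your exact expression coincides with what the paper's computation yields when the $h$-integral is correctly truncated at $yR^\alpha$, so the two derivations agree exactly, not just asymptotically. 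Your remark on $\lambda$ is also apt: the lemma states $h \sim \exp(\lambda)$ but its displayed CDF carries no $\lambda$, and the paper's appendix tacitly sets $\lambda = 1$ (integrating against $e^{-h}$); your rescaling observation $y \mapsto \lambda y$ cleanly accounts for the general case. In short: same result, same underlying integral, but your version is the more honest statement of what is proved --- an exact incomplete-gamma law whose Pareto form in \eqref{eq.CDFZ} holds in the tail regime $\lambda y R^\alpha \gg 1$, which is precisely the regime invoked later when the Breiman-type argument about subexponential tails and distance-dominated order statistics is applied.
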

\begin{proof}
See Appendix \ref{app.lemPar}. 
\end{proof}
Note that $X^{-\alpha}$ follows a Pareto distribution, which belongs to the class of subexponential distributions. According to the theorem of Breiman, which states that the class of subexponential distributions is closed under the product convolution, the multiplication of $X^{-\alpha}$ with the exponential r.v. $h$ does not change the distribution \cite{CliSam_StoPro94}. Same conclusions hold for other types of fading distributions.

\begin{remark}
In the remainder, we will often make use of integrals of the form $\int 1/(1+w^{\alpha/2})\mathrm{d}w$. For the integration interval $[b, \infty)$, we define 
\begin{align} \nonumber
&C(b, \alpha) = \int_b^{\infty}  \frac{1}{1+w^{\alpha/2}}\mathrm{d}w \\
&= 2\pi/\alpha \csc(2\pi/\alpha) - b \; _2F_1(1, 2/\alpha; (2+\alpha)/\alpha; -b^{\alpha/2}),
\end{align}
where $_2F_1(.)$ is the Gaussian hypergeometric function. For special cases, we have $C(0, \alpha) = 2\pi/\alpha \csc(2\pi/\alpha)$ and $C(b, 4) = \arctan(1/b)$.
\end{remark}

\begin{lemma} \label{lemma.Psic}
A mobile user is connected to a typical AP, which has successfully canceled $n$ interferers. The success probability of UL transmission in the presence of network interference is given by \eqref{eq.Psic}, where $\Ri = \sqrt{n/\muj \pi}$ is the cancellation radius that defines the area around the victim receiver without interferers. 
\end{lemma}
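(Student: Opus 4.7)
The plan is to condition on the distance $u$ from the served user to its AP, invoke the dominant-distance assumption to replace $I_{\Omega_j^n}$ by the interference of a PPP on the complement of a disk of radius $\Ri$, apply the Laplace functional of a PPP, and finally average over $u$.

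First, I rewrite \eqref{eq.PsicOrdSta} as $\Psic(\etat,n)=\mathbb{E}_u\,\Pr[h_u>\etat u^{\alpha} I_{\Omega_j^n}\mid u]$. Because $h_u\sim\exp(1)$, the conditional probability is the Laplace transform $\mathcal{L}_{I_{\Omega_j^n}}(\etat u^{\alpha}\mid u)$. The approximation that produces the $\approx$ sign in \eqref{eq.Psic} enters here: by the Pareto tail of $X^{-\alpha}$ and Breiman's theorem (Lemma~\ref{lem.Par} and the remark after it), the ordering of the received powers in the equivalent single-tier PPP of intensity $\muj$ is well approximated by the ordering of the distances. Hence canceling the $n$ strongest signals amounts to deleting the $n$ nearest interferers, and replacing the random distance to the $n$-th nearest point by its mean $\Ri=\sqrt{n/(\pi\muj)}$ makes the post-cancellation field a homogeneous PPP on $\mathbb{R}^2\setminus B(0,\Ri)$ around the AP.

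Next, I compute the Laplace functional of this thinned field. Using $\mathbb{E}[e^{-s h r^{-\alpha}}]=(1+sr^{-\alpha})^{-1}$, the standard PPP formula gives
\[
\mathcal{L}_{I_{\Omega_j^n}}(s)=\exp\!\Bigl(-2\pi\muj\int_{\Ri}^{\infty}\!\frac{r\,dr}{1+r^{\alpha}/s}\Bigr).
\]
Substituting $s=\etat u^{\alpha}$ and changing variables via $w=r^2/(\etat^{2/\alpha}u^2)$ turns the integral into $\pi\etat^{2/\alpha}u^2\,C\!\bigl(\Ri^2/(\etat^{2/\alpha}u^2),\alpha\bigr)$, which is exactly the exponent in \eqref{eq.Psic}.

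Finally, averaging over $u$ uses the distance distribution $f_u(u)=2\pi\leqi u\exp(-\leqi\pi u^2)$ for the link to the AP delivering the largest average received power in the equivalent single-tier network of density $\leqi$. The lower limit $\Ri$ on the outer integral arises from the event structure in \eqref{eq.event}: reaching the $n$-th event means the SoI failed all earlier decoding rounds, so under the dominant-distance approximation its own distance $u$ must exceed the distance of the $n$-th nearest interferer, whose mean is $\Ri$. The main obstacle, and the reason for the $\approx$ sign, lies precisely in controlling these two approximations simultaneously: equating ``strongest'' with ``nearest'' in the presence of i.i.d.\ fading marks, and deterministically fixing the cancellation radius at $\Ri$ rather than carrying the full distribution of $R_{(n)}$. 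Outside these two steps the computation is exact.
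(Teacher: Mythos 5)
Your proposal is correct and follows essentially the same route as the paper's proof: conditioning on the link distance $u$, using the exponential fading to reduce the success probability to the Laplace transform $\mathcal{L}_{I_{\Omega_j^n}}(\etat u^{\alpha})$ of the interference from a PPP outside a disk, the change of variables $w=r^2/(\etat^{2/\alpha}u^2)$ yielding $C(\cdot,\alpha)$, the deterministic substitution of $R_{\mathrm{I}}(n)$ by $\Ri$, and the restriction $u\in[\Ri,\infty)$ justified by distance-dominated ordering before averaging with $f_D(u)=2\pi\leqi u\exp(-\leqi\pi u^2)$. One cosmetic nuance: $\Ri=\sqrt{n/(\pi\muj)}$ is not the mean of $R_{(n)}$ but the radius whose disk contains $n$ interferers on average (equivalently $\Ri^2=\EE[R_{(n)}^2]$), which is exactly how the paper motivates it.
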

\begin{proof}
Similar to the DL coverage probability derived in \cite{AndBacGan_TC2011}, we define the UL coverage probability conditioned on the distance of the intended link after successfully canceling $n$ interferers as
\addtocounter{equation}{1}
\begin{align} \label{eq.PsicConDisLap} \nonumber
\Psic(\etat, n \,|\,  u )&= \PP^{!u}\left\lbrace \frac{h_u u^{-\alpha}}{\sum_{v \in \Omega_j^n \backslash \lbrace u\rbrace } h_v v^{-\alpha}} \geq \etat \right\rbrace \\ \nonumber 
& \overset{(a)}{=}\EE_{I_{\Omega_j^n}} \left \{\PP\left[ h_u > \etat u^{\alpha} I_{\Omega_j^n}\right] \right\} \\ \nonumber
& \overset{(b)}{=}  \EE_{I_{\Omega_j^n}}\left\{ \exp (-\etat u^{\alpha} I_{\Omega_j^n}\right\} \\
& = \mathcal{L}_{I_{\Omega_j^n}}(\etat u^{\alpha}),
\end{align}
where $(a)$ holds because of Slivnyak's theorem \cite{HaeGan_NOW2009}, and where $(b)$ assumes a Rayleigh fading channel. The Laplace transform of $I_{\Omega_j^n}$ is denoted as $\mathcal{L}_{I_{\Omega_j^n}}(s)$. Similar to \cite{AndBacGan_TC2011}, the Laplace transform of the partially canceled interference is obtained by applying the probability generating functional (PGFL) and the conditional coverage probability in \eqref{eq.PsicConDisLap} can be written as
\be \label{eq.PsicConDisHelp}
\Psic(\etat, n \,|\,  u )  = \exp\left( -2 \pi \muj \int_{R_{\mathrm{I}}(n)}^{\infty} \frac{v}{1 + \frac{v^{\alpha}}{\etat u^\alpha}}\mathrm{d}v \right), 
\ee
where $R_{\mathrm{I}}(n)$ is the distance from the origin to the $n$-th interferer. By change of variable $w = v^2/(\etat^{2/\alpha}u^2)$, we can express \eqref{eq.PsicConDisHelp} as 
\begin{align} \nonumber \label{eq.PsicCond}
&\Psic(\etat, n \,|\, u ) \\ \nonumber
&= \exp\left( -\pi \muj \etat^{2/\alpha} u^2 \int_{b(u)}^{\infty} \frac{1}{1 + w^{\alpha/2}}\mathrm{d}w \right) \\
& = \exp \left(-\pi \muj \etat^{2/\alpha} C(b(u), \alpha) u^2\right),
\end{align}
where $b(u) = R_{\mathrm{I}}(n)^2/\etat^{2/\alpha}u^2$. The integration interval of the function $C(b(u), \alpha)$ depends on $R_{\mathrm{I}}(n)$, and therefore, the expectation should be taken with respect to the distance to the $n$-th interferer. From \cite{Hae_TIT2005}, the probability density function (PDF) of $R_{\mathrm{I}}(n)$ is given by
\be \label{eq.distDistrN}
f_{R_\mathrm{I}(n)}(r) = \exp(\muj \pi r^2) \frac{2(\muj \pi r^2)^n}{r\Gamma(n)} \quad .
\ee
Since the expectation can only be solved by numerical integration, we will approximate $R_{\mathrm{I}}(n)$ by the cancellation radius $\Ri = \sqrt{n/\muj\pi}$, which encloses on average $n$ mobile users such that $b(u) \approx \Ri^2/(\etat^{2/\alpha} u^2)$. As the SIC procedure cancels at each step the signal with the strongest power, we will have $u \in [\Ri, \infty)$ on average.  To guarantee the maximum average received signal power in the equivalent single-tier network, each user connects to the closest AP such that the distribution of the distance $D$ with respect to the intended base station is given by $f_D(u) = 2 \leqi \pi u \exp(-\leqi \pi u^2)$ \cite{Hae_TIT2005}.\footnote{Note that this distance distribution is only exact when the point processes of users and base stations are independent.} To find the unconditional success probability, we take the expectation over $u$ and find \eqref{eq.Psic} which concludes the proof. 
\end{proof}

\subsection{Interference cancellation}
In the following, the success probability is derived to decode and cancel the $n$-th strongest signal, assuming that the interference from the $n-1$ strongest signals has been previously canceled. This result can be achieved by (i) building on the PGFL and distance dominated order statistics, or (ii) using the truncated stable distribution (TSD). Both approaches lead to an elegant expression of the success probability. However, the approach based on the PGFL is more general, while the approach based on the TSD gives more insight in terms of the convergence of the interference distribution. 

\subsubsection{Probability generating functional approach}
\begin{figure*}[!t]
\normalsize
\setcounter{MYtempeqncnt}{\value{equation}}
\setcounter{equation}{21}
\be \label{eq.Pssic}
\PP_{\mathrm{s, SIC}}(\etat, n) \approx \PP_\s(\etat) + \sum_{i=1}^{N} \left(\prod_{n = 0}^{i-1}(1-\PP_{\mathrm{s, IC}}(\etat, R_{\mathrm{I},n}))\right) \left( \prod_{n=1}^{i} \PP_{\mathrm{s, can}}(\etat, n) \right) \PP_{\mathrm{s, IC}}(\etat, R_{\mathrm{I}, i}) 
\ee
\setcounter{equation}{\value{MYtempeqncnt}}
\hrulefill
\end{figure*}
\begin{lemma} \label{lemma.Pscan}
Considering a typical AP that successfully canceled the $n-1$ strongest signals, the success probability to decode the $n$-th strongest signal is given by
\be \label{eq.Pscan}
\Pscan(\etat, n) = \frac{1}{(1 + \etat^{2/\alpha} C(1/\etat^{2/\alpha}, \alpha))^{n}} \quad .
\ee
\end{lemma}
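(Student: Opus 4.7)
The plan is to mirror the Laplace-transform argument used for Lemma~\ref{lemma.Psic}, but now applied to the event that the $n$-th strongest received signal exceeds $\etat$ times the residual interference after the first $n-1$ strongest have been removed. Because we are invoking the distance-dominated ordering that was set up just before Lemma~\ref{lem.Par}, I identify the $n$-th strongest signal with $X_{(n)} = h_n R_{\mathrm{I},n}^{-\alpha}$, where $h_n \sim \exp(1)$ is independent of everything else and $R_{\mathrm{I},n}$ is the distance to the $n$-th closest interferer in a PPP of intensity $\muj$. The cancellation succeeds iff
\[
h_n \,\geq\, \etat R_{\mathrm{I},n}^{\alpha} \, I_{\Omega_j^{n}},
\]
with $I_{\Omega_j^n}$ the aggregate interference from the points of the PPP strictly beyond $R_{\mathrm{I},n}$.

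First I would condition on $R_{\mathrm{I},n}=r$. By the void/strong Markov property of the PPP, the interferers at distance $>r$ form, conditionally, a homogeneous PPP of the same intensity $\muj$ on $\mathbb{R}^2\setminus B(0,r)$. Because $h_n$ is unit-exponential, $\Pr[h_n \geq \etat r^{\alpha} I_{\Omega_j^n}\mid r] = \mathcal{L}_{I_{\Omega_j^n}}(\etat r^{\alpha}\mid r)$, and this Laplace transform is evaluated in the standard way via the PGFL of the PPP, yielding
\[
\exp\!\left(-2\pi \muj \int_{r}^{\infty} \frac{v}{1+v^{\alpha}/(\etat r^{\alpha})}\,\mathrm{d}v\right).
\]
The substitution $w=v^{2}/(\etat^{2/\alpha}r^{2})$, identical to the one used in the proof of Lemma~\ref{lemma.Psic}, rewrites the integral as $\pi \muj \etat^{2/\alpha} r^{2}\, C(1/\etat^{2/\alpha},\alpha)$; the crucial point here is that the lower limit becomes $1/\etat^{2/\alpha}$ rather than $\Ri^{2}/(\etat^{2/\alpha}u^{2})$, because the ``receiver-to-signal'' distance and the ``guard radius'' coincide and cancel.

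Next I would de-condition on $r$ using the density of the $n$-th nearest neighbor in a 2D PPP, $f_{R_{\mathrm I}(n)}(r)=\exp(-\muj\pi r^{2})\,\dfrac{2(\muj\pi r^{2})^{n}}{r\,\Gamma(n)}$, quoted in \eqref{eq.distDistrN}. Making the natural change of variable $t=\muj\pi r^{2}$, the $r$-dependence in both the conditional success expression and the density collapses cleanly into
\[
\Pscan(\etat,n)=\frac{1}{\Gamma(n)}\int_{0}^{\infty} t^{\,n-1}\exp\!\bigl(-t\,[1+\etat^{2/\alpha}C(1/\etat^{2/\alpha},\alpha)]\bigr)\,\mathrm{d}t,
\]
which is simply the Laplace transform of a $\Gamma(n,1)$ random variable evaluated at $\etat^{2/\alpha}C(1/\etat^{2/\alpha},\alpha)$, giving $(1+\etat^{2/\alpha}C(1/\etat^{2/\alpha},\alpha))^{-n}$, exactly as claimed.

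The only real obstacle is conceptual rather than computational: I need to justify replacing the ordered received-power triple $(Q,h,v)$ by $h_n R_{\mathrm{I},n}^{-\alpha}$, i.e. that in the equivalent single-tier reduction the ordering of $X_{(\cdot)}$ is driven by distance and not by fading. The paper has already argued this through Lemma~\ref{lem.Par} and Breiman's product-convolution result, so I would invoke these directly rather than re-proving them. A secondary subtlety is that after the substitution the lower limit of $C(\cdot,\alpha)$ becomes $1/\etat^{2/\alpha}$, which is independent of $r$ and is precisely what makes the final integration telescope into a clean gamma-moment expression; I would flag this as the reason the answer has the neat power-of-$n$ form.
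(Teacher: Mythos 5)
Your proposal is correct and follows essentially the same route as the paper's own proof: the distance-dominated ordering reduces $X_{(n)}$ to $h_n R_{\mathrm{I},n}^{-\alpha}$, the PGFL/Laplace-transform step conditioned on $R_{\mathrm{I},n}=r$ yields $\exp\bigl(-\pi\muj\etat^{2/\alpha}r^2 C(1/\etat^{2/\alpha},\alpha)\bigr)$ with the $r$-independent lower limit you correctly flag as the crux, and de-conditioning with the $n$-th nearest-neighbor density via the gamma integral gives the stated power-of-$n$ form. You even state the density \eqref{eq.distDistrN} with the correct negative exponent (the paper has a sign typo there), so nothing is missing.
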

\begin{proof}
In the SIC scheme, interferers are canceled according to descending received signal power. We consider the order statistics $X_{(i)}$ of the  signal power to define the probability of successfully decoding the $n$-th strongest signal. After successfully decoding and subtracting $n-1$ signals from the received signal, the success probability can be written as
\be
\Pr\left[ \frac{X_{(n)}}{\sum_{i \in \Omega_j^{n}}X_{(i)}} \geq \etat \right] = \Pr \left[\sum_{i = n+1}^{\infty} X_{(i)} \leq X_{(n)}/\etat \right] .
\ee  
Since it is unwieldy to characterize the distribution of the sum of order statistics including both geometry and fading, we assume that the order statistics are dominated by the distance, such that  $X_{(j)} \geq X_{(i)}$ with $i < j$ is equivalent to $v_j \leq v_i$. When the SoI is the strongest signal from the received multi-user signal corresponding to distance $v_j$, then the distance of the remaining interferers will be in the interval $[v_j, \infty)$. This approximation leads to a remarkable simplification for the calculation of the probability of successfully decoding the $n$-th interferer conditioned on $v_n$, which can then be expressed as
\begin{align} \nonumber \label{eq.PscanCond}
&\Pscan(\etat, n \,|\, v_n) = \Pr\left( \frac{X_{(n)}}{I_{\Omega_j^n}} \geq \etat \right) \\ \nonumber
& = \exp\left( -\pi\muj \etat^{2/\alpha} v_n^2 \int_{R_{\mathrm{I}}(n)^2/(\eta^{2/\alpha}v_n^2)}^{\infty} \frac{1}{1+w^{\alpha/2}} \mathrm{d}w \right) \\
& = \exp\left( -\pi\muj \etat^{2/\alpha} v_n^2 C(1/\etat^{2/\alpha}, \alpha) \right)  \quad .
\end{align}
Note that the residual interferers are located outside the circular area with radius $v_n$, where $R_{\mathrm{I}}(n) = v_n$ and the function $C(1/\etat^{2/\alpha}, \alpha)$ is independent of $v_n$. Using \eqref{eq.distDistrN} and \eqref{eq.PscanCond}, we get
\begin{align} \nonumber \label{eq.Pscan_ave}
&\Pscan(\etat, n) = \int_0^\infty \exp\left(-\pi \muj \etat^{2/\alpha} C(1/\etat^{2/\alpha}, \alpha) v_n^2 \right) \\ \nonumber
& \hspace{7em} \exp(-\pi \muj v_n^2) \frac{2(\pi \muj v_n^2)^n}{v_n \Gamma(n)}\mathrm{d}v_n \\ \nonumber
 &= \frac{(\pi \muj)^n}{\Gamma(n)} \int_0^\infty \exp\left( -\pi \muj(1+ \etat^{2/\alpha} C(1/\etat^{2/\alpha}, \alpha))v_n^2 \right) \\ \nonumber 
& \hspace{18em} v_n^{2n-2} \mathrm{d}v_n^2  \\ \nonumber
&= \frac{(\pi \muj)^n}{\Gamma(n)} \int_0^\infty \exp\left( -\pi \muj(1+ \etat^{2/\alpha} C(1/\etat^{2/\alpha}, \alpha))w \right) \\ \nonumber
& \hspace{18em} w^{n-1} \mathrm{d}w \\ \nonumber 
&= \frac{(\pi \muj)^n}{\Gamma(n)}  (\pi \muj (1 + \etat^{2/\alpha} C(1/\etat^{2/\alpha}, \alpha))^{-n} \Gamma(n) \\
&= \frac{1}{(1 + \etat^{2/\alpha} C(1/\etat^{2/\alpha}, \alpha))^{n}} .
\end{align}
\end{proof}
It is worthwhile to note that \eqref{eq.Pscan_ave} is independent of $\muj$. This is in line with \cite{ZhaHae_Globecom2012, ZhaHae_Asilomar13} where the authors prove that the probability to successfully cancel at least  $n$ interferers is scale invariant with respect to the density as long as the analysis is restricted to the power-law density case.  

\subsubsection{Truncated stable distribution approach}
In the unbounded path loss model $g_{\alpha}(x) = \|x\|^{-\alpha}$, the aggregate interference power generated by the interfering nodes scattered over $\RR^2$ can be modeled by a skewed stable distribution $I_{\Omega_j}  \sim \mathcal{S}(\alpha_I = 2/\alpha, \beta = 1, \gamma = \pi \lambda C^{-1}_{2/\alpha}\EE\lbrace P_i^{2/\alpha} \rbrace)$ \cite{WinPinShe_ProcIEEE2009}. Since the singularity at $0$ in the unbounded path loss model can have significant effects on the interference distribution \cite{InaChiaPooWic_JSAC09}, a bounded path loss model based on the truncated stable distribution (TSD) has been proposed to avoid the singularity for zero distance by restricting the interferers to a ring structure with a minimum range $d_{\text{min}}$ and maximum range $d_{\text{max}}$ \cite{RabQueShiWin_JSAC2011}. The ring structure can be applied similarly to represent the guard zone around the victim receiver in the SIC scenario, where the inner radius corresponds to the interference cancellation radius and $d_{\text{max}} = \infty$.  In the bounded path loss model, the aggregate interference power is distributed according to a skewed TSD with CF given by \cite{RabConWin_Globecom2011}
\be \label{eq.TSD}
\psi_{I_{\Omega_j^n}}(j\omega) = \exp\left( \gamma' \Gamma(-\alpha_I)\left[ (g - j\omega)^{\alpha_I} -g^{\alpha_I} \right] \right), 
\ee
where the parameters $\alpha_I$, $g$, and $\gamma'$ determine the shape of the TSD. 
\begin{lemma} \label{lem.PscanTSD}
For $\alpha = 4$, the probability to decode and cancel the $n$-th strongest signal after $n-1$ successful cancellations is given by
\be \label{eq.PscanTSD}
\Pscan(\etat, n|\alpha =4) = \frac{1}{(\sqrt{9/4+3\etat}-1/2)^n} .
\ee
\end{lemma}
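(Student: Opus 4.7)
My plan is to reuse the template from the proof of Lemma~\ref{lemma.Pscan}---condition on the distance $v_n$ to the $n$-th strongest interferer, write the conditional cancellation probability as a Laplace transform of $I_{\Omega_j^n}$, and then de-condition against the PDF \eqref{eq.distDistrN}---but I would feed in the Laplace transform coming from the TSD characterisation \eqref{eq.TSD} instead of the one obtained through the PGFL. Specialising to $\alpha=4$ gives $\alpha_I=1/2$ and $\Gamma(-1/2)=-2\sqrt{\pi}$, and analytically continuing $j\omega\mapsto -s$ in \eqref{eq.TSD} yields
\begin{equation*}
\mathcal{L}_{I_{\Omega_j^n}}(s) = \exp\!\bigl(-2\sqrt{\pi}\,\gamma'\bigl[(g+s)^{1/2}-g^{1/2}\bigr]\bigr).
\end{equation*}

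The technical heart of the lemma is identifying the TSD parameters $(g,\gamma')$ for the post-cancellation ring $[v_n,\infty)$ with intensity $\muj$. I would invoke the ring parameterisation of \cite{RabConWin_Globecom2011} directly; as a sanity check, matching the $O(s)$ and $O(s^2)$ terms in the small-$s$ expansion of $\mathcal{L}_{I_{\Omega_j^n}}(s)$ against those of the exact Laplace transform $\exp(-\pi\muj\sqrt{s}\arctan(\sqrt{s}/v_n^2))$ obtained from the PGFL pins the pair down to $g = 3v_n^4/4$ and $\gamma' = \sqrt{3\pi}\,\muj/2$. Substituting these into the Rayleigh-fading step already used in \eqref{eq.PscanCond} then gives $\Pscan(\etat,n\mid v_n) = \mathcal{L}_{I_{\Omega_j^n}}(\etat v_n^{4})$, and the common $v_n^2$ factor inside the square roots together with the identities $\sqrt{3}\sqrt{3/4+\etat}=\sqrt{9/4+3\etat}$ and $\sqrt{3}\cdot\sqrt{3}/2=3/2$ are expected to collapse the expression to
\begin{equation*}
\Pscan(\etat,n\mid v_n) = \exp\!\bigl(-\pi\muj v_n^2\bigl[\sqrt{9/4+3\etat}-3/2\bigr]\bigr).
\end{equation*}

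Finally, de-conditioning integrates this exponential against \eqref{eq.distDistrN}. The integrand will be a Gamma kernel in $v_n^2$ with rate $\pi\muj\bigl(1 + \sqrt{9/4+3\etat}-3/2\bigr)=\pi\muj\bigl(\sqrt{9/4+3\etat}-1/2\bigr)$, and the $(\pi\muj)^n/\Gamma(n)$ prefactor should cancel exactly as it does in \eqref{eq.Pscan_ave}, leaving the claimed $1/(\sqrt{9/4+3\etat}-1/2)^n$. The principal obstacle is cleanly identifying $(g,\gamma')$ for the guard-zone geometry, since the TSD is an approximation of the true interference law rather than an equality; once that pair is in hand, the conditioning and Gamma-averaging steps are mechanical replicas of the corresponding steps in the proof of Lemma~\ref{lemma.Pscan} and the closed form drops out.
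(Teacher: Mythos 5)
Your proposal is correct and takes essentially the same route as the paper's Appendix B: fit the TSD parameters by matching the first two cumulants of the post-cancellation interference (the paper obtains $\kappa(1),\kappa(2)$ from Campbell's theorem on the ring $[v_n,\infty)$, which is equivalent to your $O(s)$/$O(s^2)$ matching against the exact PGFL Laplace transform $\exp(-\pi\muj\sqrt{s}\arctan(\sqrt{s}/v_n^2))$), evaluate $\mathcal{L}_{I_{\Omega_j^n}}(\etat v_n^{4})$ via the Rayleigh-fading step, and gamma-average against \eqref{eq.distDistrN}. Your identified pair $g=3v_n^4/4$, $\gamma'=\sqrt{3\pi}\,\muj/2$ and the conditional probability $\exp\bigl(-\pi\muj v_n^2\bigl[\sqrt{9/4+3\etat}-3/2\bigr]\bigr)$ coincide exactly with the paper's intermediate expressions, and the final Gamma integral with rate $\pi\muj\bigl(\sqrt{9/4+3\etat}-1/2\bigr)$ yields \eqref{eq.PscanTSD} as claimed.
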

\begin{proof}
See Appendix \ref{app.lemPscanTSD}.
\end{proof}

Note that \eqref{eq.PscanTSD} has the same structure as \eqref{eq.Pscan_ave} and similarly shows that the probability of canceling the $n$-th interferer is independent of the interferer density. The use of the TSD approach is beneficial since it can provide insight how fast the aggregate network interference converges to a Gaussian distribution by calculating the kurtosis after canceling $n$ interferers. Conditioned on the exclusion radius $r$, we have 
\be \label{eq.kurDis}
\gamma_2 = \frac{\kappa(4)}{\kappa(2)^2} = \frac{6(\alpha - 1)^2}{(2\alpha - 1)}\frac{1}{\muj \pi r^2},
\ee
where $\gamma_2 = 0$ represents the case of a normal distribution. By averaging over $r$, the kurtosis after canceling $n$ interferers is given by
\begin{align} \nonumber
\gamma_2(\alpha, n) &= \frac{6(\alpha - 1)^2}{(2\alpha - 1)} \int_0^\infty \frac{1}{\muj \pi r^2} \exp(-\muj \pi r^2) \frac{2(\muj \pi r^2)^n}{r\Gamma(n)}\mathrm{d}r \\ \nonumber
&= \frac{6(\alpha - 1)^2}{(2\alpha - 1)} \frac{1}{\Gamma(n)} \int_0^\infty \exp(-\muj \pi r^2) (\muj \pi r^2)^{n-2} \mathrm{d} \muj \pi r^2 \\ \nonumber
&= \frac{6(\alpha - 1)^2}{(2\alpha - 1)} \frac{\Gamma(n-1)}{\Gamma(n)} \\
& =  \frac{6(\alpha - 1)^2}{(2\alpha - 1)} \frac{1}{n-1} .
\end{align} 
This expression yields useful insight in the convergence rate of the aggregate network interference to a Gaussian distribution as a function of the number of canceled interferers.

\subsection{Success probability with SIC}
\begin{theorem} \label{theo.Pssic}
The coverage probability $\PP_{\mathrm{s, SIC}}$ for a receiver that applies SIC with a maximum of $n$ interferers being canceled is given by \eqref{eq.Pssic}, where $\Psic(\etat, \Ri)$ and $\Pscan(\etat, n)$ are given in \eqref{eq.Psic} and \eqref{eq.Pscan}, respectively.  
\end{theorem}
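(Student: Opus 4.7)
The plan is to write the total success probability as a sum over the mutually exclusive events listed in \eqref{eq.event}, and then to factor each event's probability using an independence approximation so that the already-established Lemmas can be applied term by term. Concretely, since at most $N$ cancellations are attempted and the $i$-th event corresponds to successful decoding of the SoI occurring precisely after $i$ cancellations, we have
\begin{align*}
\PP_{\mathrm{s,SIC}}(\etat,N) = \sum_{i=0}^{N} \Pr\left[\,E_i\,\right],
\end{align*}
where $E_0$ is the event of decoding the SoI with no cancellation and, for $i\ge 1$, $E_i$ is the intersection of (a) $i$ consecutive decoding failures at the SoI, (b) $i$ successful cancellations of the ordered interferers $X_{(1)},\ldots,X_{(i)}$, and (c) successful decoding of the SoI after the $i$-th cancellation.

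The first step is to compute $\Pr[E_0]=\PP_\s(\etat)$, which gives the leading term outside the summation. For $i\ge 1$, I would split $E_i$ into the three groups (a)--(c) and invoke independence across the groups. Within group (b), the event that the $n$-th strongest signal is decodable after $n-1$ cancellations is given directly by Lemma \ref{lemma.Pscan}, which yields the factor $\prod_{n=1}^{i}\PP_{\mathrm{s,can}}(\etat,n)$. For (c), the event that the SoI is successfully decoded once the $i$ strongest interferers have been removed is, conditional on the cancellation radius $R_{\mathrm{I},i}$, given by Lemma \ref{lemma.Psic} evaluated at $n=i$, producing the factor $\PP_{\mathrm{s,IC}}(\etat, R_{\mathrm{I},i})$. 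For (a), the $i$ successive failures are modeled by the complements $1-\PP_{\mathrm{s,IC}}(\etat,R_{\mathrm{I},n})$ for $n=0,\ldots,i-1$, where the $n=0$ term is exactly $1-\PP_\s(\etat)$ and corresponds to the initial failure to decode with all interferers present.

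Assembling the factors and summing over $i\in\{1,\ldots,N\}$ yields \eqref{eq.Pssic}. The main obstacle, and the reason for the approximation symbol, is that the three groups of events above are \emph{not} genuinely independent: the residual interference $I_{\Omega_j^n}$ after the $n$-th cancellation shares the same underlying point process as $I_{\Omega_j^{n-1}}$, so consecutive decoding attempts at the SoI are positively correlated, and similarly the successive cancellation events all depend on the same ordered distances $\{v_{(n)}\}$. I would justify the product form by (i) invoking the distance-dominated order-statistics approximation already used in Lemmas \ref{lemma.Psic}--\ref{lemma.Pscan}, which decouples the Poisson process outside the cancellation radius from the inside, and (ii) replacing the random $R_{\mathrm{I}}(n)$ by the deterministic cancellation radius $R_{\mathrm{I},n}=\sqrt{n/(\muj\pi)}$, so that conditional on these radii the Laplace-transform arguments yielding \eqref{eq.Psic} and \eqref{eq.Pscan} can be applied independently at each stage. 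A brief remark noting that this is the same level of approximation as in Lemmas \ref{lemma.Psic} and \ref{lemma.Pscan}, and that the accuracy is validated numerically in Section \ref{sec.NumRes}, closes the argument.
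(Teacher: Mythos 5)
Your proposal is correct and follows essentially the same route as the paper: the paper's proof is a one-line appeal to the event decomposition in \eqref{eq.event} together with Lemmas \ref{lemma.Psic} and \ref{lemma.Pscan}, which is precisely the sum-over-mutually-exclusive-events and term-by-term factorization you carry out. Your explicit discussion of why the product form is only an approximation (correlated residual interference across stages, shared ordered distances, and the deterministic replacement $R_{\mathrm{I},n}=\sqrt{n/(\muj\pi)}$) is a faithful elaboration of what the paper leaves implicit behind the $\approx$ in \eqref{eq.Pssic}.
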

\begin{proof}
The proof follows directly from the definition of the sequence of events in \eqref{eq.event} and the results of $\Psic$ and $\Pscan$ in Lemma \ref{lemma.Psic} and \ref{lemma.Pscan}. 
\end{proof}


We provide now some numerical results that illustrate the effectiveness of SIC for the association policy based on the maximum average received signal power. We consider a network of APs arranged over a two-dimensional plane with density $\lm = 10^{-4} \enspace /\text{m}^2$. We assume a fully loaded network, where each cell allocates at a given time every subchannel to an active user. Hence, the density of mobile users on subchannel $j$ is given by $\muj = 10^{-4}/\text{m}^2$. 
In Fig. \ref{fig.Ps_can_compare}, $\Pscan$ is depicted for different values of the threshold as a function of the order of the canceled interferer. The figure illustrates that the probability to cancel the $n$-th interferer decreases quickly with $n$ and with increasing target SIR. Simulation results are added to validate the model. When the received signal power is ordered only with respect to the distance, the simulations coincide with the analytical results based on the PGFL approach and the TSD approach. Moreover, a good agreement between analysis and simulations is achieved even when the ordering is performed based on the joint effect of distance and fading. From the numerical results depicted in Fig. \ref{fig.Ps_can_compare}, we observe that the approximation deteriorates for lower values of the SIR threshold. As the SIR threshold decreases, we resort to values drawn from the central part of the PDF of the SINR, whereas the theorem of Breiman, necessary to assume that the order statistics are dominated by the distance, holds for the tail of the distribution.  
\begin{figure}[t!]
\centering
\psfrag{number of interferers cancelled}[][][0.75]{$n$}
\psfrag{Pscan}[][][0.75]{$\Pscan$}
\includegraphics[width = 1 \linewidth] {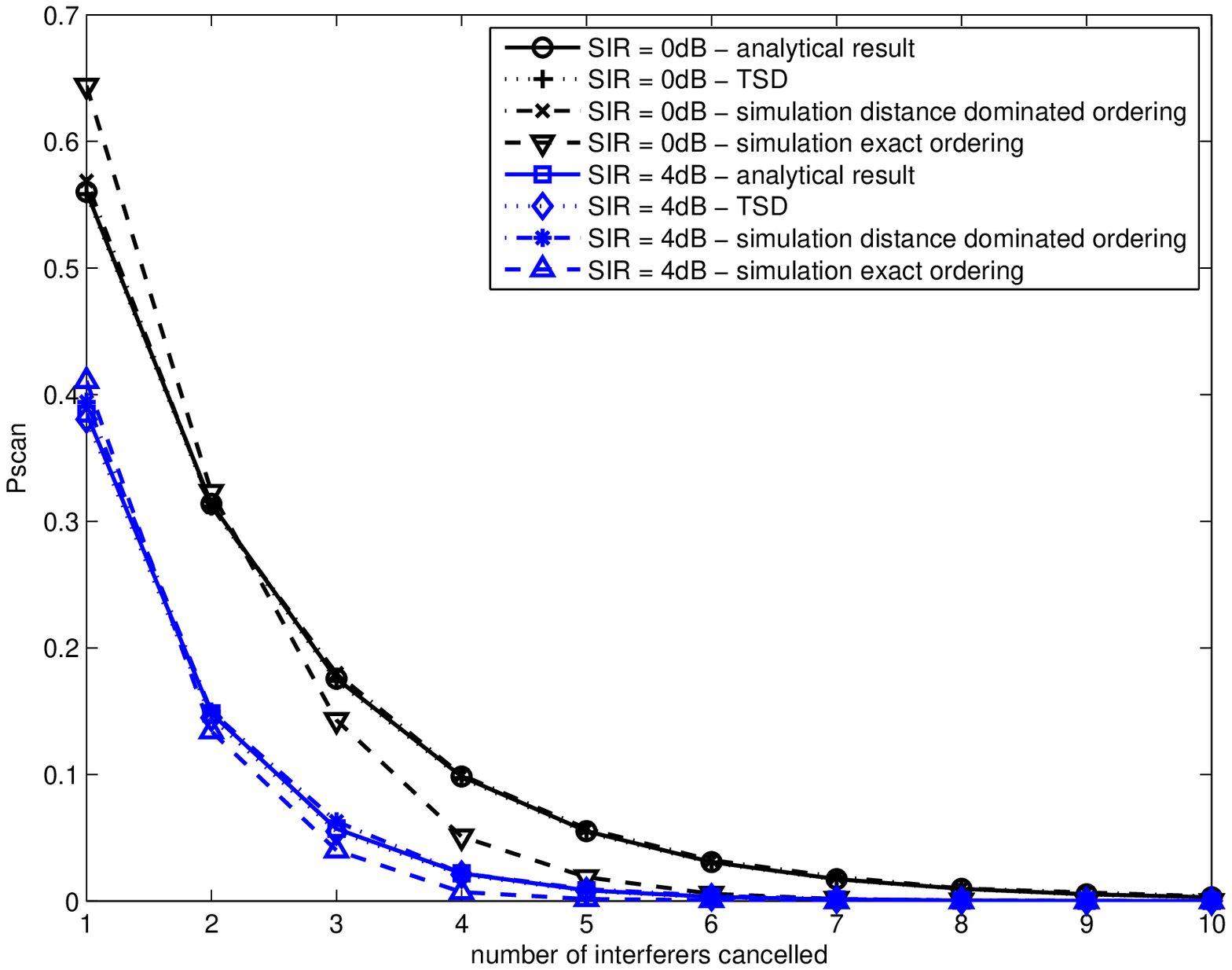}
\caption{Comparison of $\Pscan$ using analytical and simulation results.}
\label{fig.Ps_can_compare}
\end{figure}

Figure \ref{fig.PssicVsHuang} illustrates the success probability including SIC as a function of the SINR target for different values of the maximum number of cancellations. To validate our analysis, we compare the results with bounds that have been proposed in \cite{HuaLauChe_JSAC2009} for the scenario of spectrum sharing between cellular and mobile ad hoc networks (MANET), which can be shaped to represent a single-tier cellular network. In this work, the authors present bounds based on the separation of interferers into groups of strong and weak interferers where each strong interferer alone can cause outage, accounting both for the effects of the spatial distribution of the nodes and the fading affecting each link. Interference cancellation is performed in descending order or received signal power, and the received power of each interferer intended for cancellation must exceed the SoI signal power multiplied with a factor $\kappa > 1$. From Fig. \ref{fig.PssicVsHuang}, we observe that the curves derived by our analysis strictly fall within the bounds proposed in \cite{HuaLauChe_JSAC2009}. Furthermore, we observe a modest improvement in the success probability when SIC is applied for threshold values lower than $2$ dB, whilst for higher threshold values this improvement is negligible. The numerical results illustrate that the cancellation of the first order interferer has a sensible effect on the receiver performance, while the cancellation of higher order interferers yields a marginal improvement of the success probability.
\begin{figure}[t!]
\centering
\psfrag{Ps with SIC}[][][0.75]{$\PP_{\mathrm{s, SIC}}$}
\psfrag{Threshold (dB)}[][][0.75]{$\etat$  [dB]}
\psfrag{SIC - k = 0}[][][0.5]{SIC - $ k = 0$}
\psfrag{SIC - k = 1}[][][0.5]{SIC - $ k = 1$}
\psfrag{SIC - k = 2}[][][0.5]{SIC - $ k = 2$}
\psfrag{SIC - k = 3}[][][0.5]{SIC - $ k = 3$}
\psfrag{SIC - k = 4}[][][0.5]{SIC - $ k = 4$}
\psfrag{SIC - k = 5}[][][0.5]{SIC - $ k = 5$}
\psfrag{UB}[][][0.5]{UB}
\psfrag{LB}[][][0.5]{\,LB}
\includegraphics[width = 1 \linewidth]{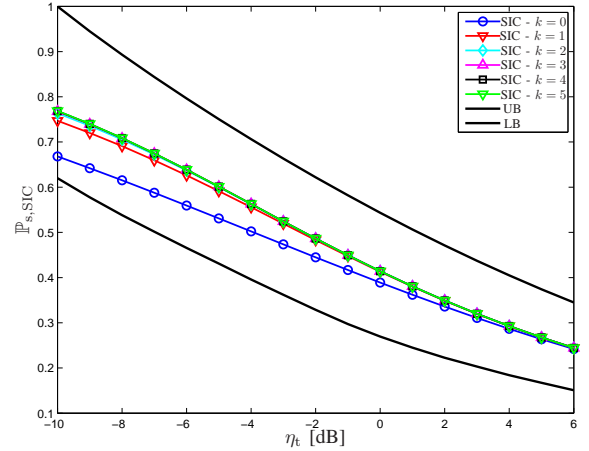}
\caption{Coverage probability in the presence of SIC for different values of the maximum number of cancellations. The blue curve represents the success probability when no IC technique is used.}
\label{fig.PssicVsHuang}
\end{figure}

The results presented in this section provide a guideline for the SIC computational requirements by investigating the performance gain from the  cancellation of $n$ interferers. Although applicable for UL and DL transmissions, SIC is particularly attractive in UL since it harnesses the processing power of access points to cancel strong interfering signals from nearby transmitters. However, as only the first cancellation has a significant effect on the performance, the computational requirements related to SIC are limited and hence, SIC qualifies also for DL transmissions. Following the association policy where the nodes connect to the AP offering the highest average received signal power, the overall performance gain of SIC is however modest for realistic values of the target SINR. In the following section, we present several association policies where the performance gain of SIC is far more appreciable. 

\section{Association policies and SIC gains} \label{sec.SICBen}
In this section, we analyze multi-tier networks with different association policies. Since we aim to deepen the understanding of heterogeneous networks using concepts such as coverage area and load, we will not resort to the single-tier stochastic equivalent of the network if this results in a loss of physical insight related to the differences between the tiers. 

\subsection{Minimum load association policy}
Considering fairness between users, in case of data sensitive applications it can be preferential to connect to the AP with the lowest load, rather than to the AP that offers the highest SIR. The same observation holds for networks that apply a load balancing policy and where users are actively transferred to lightly loaded APs different from the AP of their own Voronoi cell \cite{YeRonCheAlsCarAnd_TWC13}. In the following, we consider the association policy where a user connects to the AP with the lowest load for a given connectivity range $R_{\text{con}}$ with respect to the user. In this scenario, the performance metric of interest is the rate per user, which reflects the quality of service (QoS) and depends on the AP load, defined as the number of users $M$ connected to the AP. This scenario leads to interesting trade-offs between APs where the loss of SIR can be compensated by the gain of available resource blocks per user. We consider a single tier network and we model explicitly the load of the APs by considering the marked PPP $\tilde{\Phi} = \{ (X_i, L_i)\, | \, X_i \in \Phi(\lambda), \, L_i \sim F_L(l)\}$, with $L_i$ the load of $X_i$ and $F_L(l)$ the load distribution. We consider a typical user at the origin of the Euclidean plane and we compare the performance of the max-SIR association policy with the minimum load association policy for DL transmissions in terms of rate per user. Let $\mathcal{R}$ denote the rate per user and we define the coverage probability as the  CCDF of the rate $\Prc(\rho) = \PP[\mathcal{R} > \rho]$ \cite{SinDhiAnd_TWC13}, which is given by 
\addtocounter{equation}{1}
\begin{align} \nonumber
\Prc(\rho) &= \Pr\left[\frac{1}{M}\log(1 + \text{SIR}) > \rho \right] \\
&= \Pr[\text{SIR} > 2^{M\rho}-1] = \EE_M[\PP_\s(2^{M\rho}-1)] \quad.
\end{align}
To calculate the coverage probability, we need to characterize the distribution of $M$. The load of a cell depends on the area distribution of the Voronoi cell, for which an approximation has been proposed in \cite{FerNed_PhyA}. Using this approximation, the probability mass function of $M$ is given by \cite{SinDhiAnd_TWC13}
\begin{align} \nonumber
f_M(m) &= \Pr[M = m] \\
& = \frac{3.5^{3.5}}{m!} \frac{\Gamma(m+4.5)}{\Gamma(3.5)}\left(\frac{\muj}{\lambda} \right)^m \left( 3.5 + \frac{\muj}{\lambda}\right)^{-(m+4.5)} .
\end{align}
In case of the max-SIR policy, the coverage probability conditioned on the number of associated users is given by
\begin{align} \nonumber
&\PrcMax(\rho | M) \\ \nonumber
&= \int_0^\infty 2 \lambda \pi  r \exp(-\pi \lambda r^2 (1 + \varsigma^{2/\alpha}C(1/\varsigma^{2/\alpha},\alpha)))\mathrm{d}r \\
&= \frac{1}{1 + \varsigma^{2/\alpha}C(1/\varsigma^{2/\alpha},\alpha)},
\end{align}
where $\varsigma = 2^{M\rho}-1$. 
Deconditioning over $M$, the rate coverage can be written as
\be
\PrcMax(\rho) = \sum_{m \geq 0}f_M(m)\PP_\s(2^{\rho(m+1)}-1),
\ee
where the load of the cell under consideration includes the admitted user.
\begin{lemma}
For a typical user that connects to the AP with the lowest load within the range $R_{\mathrm{con}}$, the coverage probability is given by
\be
\PrcMin(\rho) =  \sum_{m \geq 0 } f_{M_{(1)}}(m) \PP_{\s}(2^{\rho(m+1)}-1),
\ee
where $f_{M_{(i)}}(m)$ represents the \ac{PMF} of the $i$-th order statistic of the load.
\end{lemma}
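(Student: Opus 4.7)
The plan is to mirror the derivation given just above for the max-SIR case, with the key change being that the load relevant to a typical user is now the minimum load among the APs within the connectivity range, rather than the load of the serving cell under max-SIR. First I would write the rate seen by the typical user after it is admitted to the minimum-load AP as $\mathcal{R} = \frac{1}{M_{(1)}+1}\log_2(1+\mathrm{SIR})$, where the $+1$ accounts for the fact that the user itself is added to the current load $M_{(1)}$ of the chosen AP, exactly as in the max-SIR case where the denominator was $M+1$ inside the expectation $\PP_\s(2^{\rho(m+1)}-1)$. Then by monotonicity of $\log_2$, the event $\{\mathcal{R}>\rho\}$ is equivalent to $\{\mathrm{SIR}>2^{\rho(M_{(1)}+1)}-1\}$, which leads to
\begin{equation}
\PrcMin(\rho) = \EE_{M_{(1)}}\!\left[\PP_\s\!\left(2^{\rho(M_{(1)}+1)}-1\right)\right].
\end{equation}

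Next I would decondition on $M_{(1)}$ by summing against its PMF. For this step it is essential to argue that the conditional SIR distribution given $M_{(1)} = m$ is, to the approximation already adopted in the paper, the unconditional $\PP_\s(\cdot)$ evaluated at the shifted threshold. The justification goes as follows. The loads $L_i$ attached to the APs in the marked process $\tilde\Phi$ are independent marks, so conditioning on $M_{(1)}$ only selects which AP in the connectivity disk serves the user but does not alter the Palm distribution of the interfering point process. Combined with the standard independence between the fading of the useful link and the interferers, this yields $\Pr[\mathrm{SIR}>\varsigma \mid M_{(1)}=m] = \PP_\s(\varsigma)$, and summing over $m$ produces the claimed expression
\begin{equation}
\PrcMin(\rho) = \sum_{m\geq 0} f_{M_{(1)}}(m)\,\PP_\s\!\left(2^{\rho(m+1)}-1\right).
\end{equation}

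The main obstacle, and the step I would treat most carefully, is characterising $f_{M_{(1)}}$. The number of candidate APs in the connectivity disk is Poisson with mean $\lambda \pi R_{\mathrm{con}}^2$, and each carries an independent load whose marginal PMF $f_M$ is given by the Ferenc--Neda approximation used for the max-SIR derivation. From this one obtains the CDF of the minimum via $\Pr[M_{(1)}>m] = \exp\bigl(-\lambda \pi R_{\mathrm{con}}^2 (1-F_M(m))\bigr)$ (with an appropriate convention when the disk is empty), and $f_{M_{(1)}}$ follows by first differences. A subtlety worth flagging is the mild dependence between the serving AP's load and the geometry of the interference field induced by the marked-point-process coupling; as elsewhere in the paper, I would absorb this into the same independence approximation already invoked for the PPP-based load model, noting that the error is of the same order as the Ferenc--Neda fit.
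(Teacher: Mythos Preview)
Your decomposition $\PrcMin(\rho)=\EE_{M_{(1)}}\bigl[\PP_\s(2^{\rho(M_{(1)}+1)}-1)\bigr]$ is the right outer skeleton, but the step where you claim $\Pr[\mathrm{SIR}>\varsigma\mid M_{(1)}=m]=\PP_\s(\varsigma)$ with the \emph{same} $\PP_\s$ as in the max-SIR case is a genuine gap. Under the minimum-load policy the serving AP is selected by its mark $L_i$, which is independent of its location; consequently the serving AP is \emph{not} the nearest AP but is (under the i.i.d.\ mark assumption) uniformly distributed over $b(0,R_{\mathrm{con}})$, with link-distance density $f_R(r)=2r/R_{\mathrm{con}}^2$. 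Moreover, interferers can now lie closer to the user than the serving AP, so there is no exclusion radius and the interference integral runs from the origin, i.e.\ $C(0,\alpha)$ appears rather than $C(1/\varsigma^{2/\alpha},\alpha)$. The paper's proof carries out exactly this recomputation and obtains a distinct conditional success probability,
\[
\PrcMin(\rho\mid M)=\frac{1-\exp\bigl(-\pi\lambda\,\varsigma^{2/\alpha}C(0,\alpha)R_{\mathrm{con}}^2\bigr)}{\pi\lambda\,\varsigma^{2/\alpha}C(0,\alpha)R_{\mathrm{con}}^2},\qquad \varsigma=2^{M\rho}-1,
\]
which is the $\PP_\s(\cdot)$ that actually enters the sum in the lemma. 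Your independence argument between $M_{(1)}$ and the interference field is fine, but it does not license reusing the nearest-AP success probability.

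On the characterisation of $f_{M_{(1)}}$, your approach differs from the paper's. You keep the number of candidate APs Poisson and write $\Pr[M_{(1)}>m]=\exp\bigl(-\lambda\pi R_{\mathrm{con}}^2(1-F_M(m))\bigr)$. The paper instead freezes the sample size to $N=\lfloor\lambda\pi R_{\mathrm{con}}^2\rfloor$ and uses the classical i.i.d.\ order-statistic formula
\[
f_{M_{(i)}}(m)=\frac{1}{\mathcal{B}(i,N-i+1)}\int_{F_M(m-1)}^{F_M(m)}w^{\,i-1}(1-w)^{N-i}\,\mathrm{d}w.
\]
Both are approximations; yours handles the empty-disk event more gracefully, while the paper's produces the explicit beta-integral form quoted in the text. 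Either is acceptable once the correct conditional SIR law above is in place.
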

\begin{proof}
For the minimum load association policy, the typical user is appointed to the AP with the lowest load which is uniformly distributed over $b(0, R_{\text{con}})$ with distance distribution $f_R(r) = 2r/R_{\text{con}}^2$. We assume that there are $N = \lfloor \lambda \pi R_{\text{con}}^2 \rfloor$ APs within the connectivity range. The coverage probability for the minimum load scheme conditioned on the load is given by
\begin{align} \nonumber \label{eq.PrcMinM}
\PrcMin(\rho | M) &= 1/R_{\text{con}}^2 \int_0^{R_{\mathrm{con}}^2}\exp(\pi\lambda \varsigma^{2\alpha}C(0,\alpha)r^2)2r\mathrm{d}r \\
&= \frac{1- \exp(-\pi \lambda \varsigma^{2/\alpha}C(0,\alpha)R_{\text{con}}^2)}{\pi \lambda \varsigma^{2/\alpha}C(0,\alpha)R_{\text{con}}^2} .
\end{align}
The $i$-th order statistic of the load is given by \cite{DavNag_OrdSta2003}
\be
f_{M_{(i)}}(m)  = \frac{1}{\mathcal{B}(i,M-i+1)}\int_{F_M(m-1)}^{F_M(m)}w^{i-1}(1-w)^{m-i}\mathrm{d}w ,
\ee
where $\mathcal{B}(a,b)$ represents the beta function. Deconditioning \eqref{eq.PrcMinM} with respect to the first order statistic of the load, the proof is concluded.
\end{proof}

\subsection{Maximum instantaneous received signal power}
Connecting to the AP which yields instantaneously the highest SIR can be of interest for a mobile node to obtain the maximum data rate \cite{DhiGanBacAnd_JSAC12}, or to reduce the local delay $\tau = \EE\left\lbrace 1/{\PP_\s(\etat)} \right \rbrace$, which is defined as the mean time until a packet is successfully received \cite{Hae_TIT13}. In the following, we will evaluate how SIC can affect the UL success probability when the maximum instantaneous SIR policy is applied. 

In this scenario, we assume that (i) the current network state and load per AP is based on mobile node connecting to the AP which provides the maximum DL average received signal power, (ii) all tiers have equal SIR target values, and (iii) the network operates in time division duplex (TDD) mode, such that we can assume channel reciprocity, i.e. the maximum SIR in DL is also the max SIR in UL. We first determine the node densities $\mu_k$ connected to tier $k$, based on the maximum DL average received signal power. This approach is realistic since it does not require extra signaling from the APs. The user density is given by $\mu_k = \pak \mu$, where $\pak$ is the association probability to tier $k$ given by
\begin{align} \nonumber \label{eq.pa_DL}
&\pak = \Pr \left \lbrace \bigcap_{i \neq k} \left( \max_{\Phi_k} \text{SIR}_j > \max_{\Phi_i} \text{SIR}_j \right) \right\rbrace \\ \nonumber
& = \prod_{i \neq k} \Pr\left \lbrace P_k x_k^{-\alpha} > P_i x_i^{-\alpha} \right\rbrace \\ \nonumber
& = \prod_{i \neq k} \Pr\left \lbrace x_i > x_k (P_i/P_k)^{1/\alpha} \right\rbrace \\ \nonumber
& = \int_0^{\infty} \exp(- \pi x_k^2 \sum_{i \neq k} \lambda_i(P_i/P_k)^{2/\alpha}) 2 \pi \lambda_k x_k \exp(-\lambda_k \pi x_k^2) \mathrm{d}x_k \\
& = \frac{\lambda_k}{\sum_{i \in \mathcal{K}}\lambda_i \left( \frac{P_i}{P_k}\right)^{2/\alpha}}, 
\end{align}
where $x_i$ represents the distance to the closest AP of tier $i$ and we consider that the maximum SIR corresponds to the maximum received signal power. In the following lemma, we define the outage probability for UL transmissions in a multi-tier heterogeneous network without IC capabilities.
\begin{lemma} \label{lem.PoutMultiMax}
If a typical user connects to the AP which yields the maximum instantaneous SIR, then the outage probability is given by 
\be \label{eq.PoutMaxSIR}
\Pout(\etat, \boldsymbol{\lambda}, \boldsymbol{\mu}, \mathbf{Q}) = \exp\left( \frac{- \sum_{j=1}^{K}\lambda_j Q_j^{2/\alpha}}{\etat^{2/\alpha} C(0,\alpha) \sum_{i=1}^K \mu_i Q_i^{2/\alpha}}\right) .
\ee
\end{lemma}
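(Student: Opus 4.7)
The plan is to express the outage event as the void event of the random point process
$$\Xi(\etat) = \bigcup_{k \in \mathcal{K}} \bigl\{x \in \Phi_k : \text{SIR}_j(u \to x) > \etat\bigr\}$$
of APs at which the typical user would be decoded, and then to evaluate $\Pr[\Xi(\etat)=\emptyset]$ using the void-probability formula for a superposition of (thinned) PPPs. Since max-SIR association selects the AP attaining the largest instantaneous SIR across all tiers, the outage event is exactly $\{\Xi(\etat)=\emptyset\}$; if $\Xi(\etat)$ is taken to be Poisson by an independent-thinning argument, then $\Pout = \exp(-\EE|\Xi(\etat)|)$ and the task reduces to computing the mean measure of $\Xi(\etat)$.

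For the per-AP success probability I would fix a tier-$k$ AP at $x$. Since the user transmits with UL power $Q_k$ when probed against a tier-$k$ AP, Rayleigh fading of the desired link gives
$$\Pr[\text{SIR}_j(u \to x) > \etat] = \EE\bigl[\exp(-\etat Q_k^{-1}\|x\|^\alpha I_x)\bigr] = \mathcal{L}_{I_x}\bigl(\etat Q_k^{-1}\|x\|^\alpha\bigr),$$
where $I_x = \sum_{i \in \mathcal{K}} \sum_{v \in \Psi_{i,j}} Q_i h_v \|v-x\|^{-\alpha}$ is the aggregate UL interference at $x$. Applying the PGFL to each independent user PPP $\Psi_{i,j}$ of intensity $\mu_i$, together with the same change of variable $w = \|y\|^2/(sQ_i)^{2/\alpha}$ used in the proof of Lemma \ref{lemma.Psic}, the Laplace transform collapses to
$$\mathcal{L}_{I_x}(s) = \exp\Bigl(-\pi C(0,\alpha)\, s^{2/\alpha} \sum_{i \in \mathcal{K}} \mu_i Q_i^{2/\alpha}\Bigr),$$
and substituting $s = \etat Q_k^{-1}\|x\|^\alpha$ produces a Gaussian in $\|x\|$ with exponent $-\pi C(0,\alpha)\etat^{2/\alpha} Q_k^{-2/\alpha}\|x\|^2 \sum_i \mu_i Q_i^{2/\alpha}$.

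To finish I would apply Campbell's theorem to the superposition $\bigcup_k \Phi_k$, which gives
$$\EE|\Xi(\etat)| = \sum_{k \in \mathcal{K}} \lambda_k \int_{\RR^2} \exp(-c_k \|x\|^2)\,\mathrm{d}x, \quad c_k = \pi C(0,\alpha)\etat^{2/\alpha} Q_k^{-2/\alpha}\sum_i \mu_i Q_i^{2/\alpha}.$$
Each Gaussian integral equals $\pi/c_k = Q_k^{2/\alpha}/\bigl(C(0,\alpha)\etat^{2/\alpha}\sum_i \mu_i Q_i^{2/\alpha}\bigr)$, and summing over $k$ produces the factor $\sum_k \lambda_k Q_k^{2/\alpha}$ in the numerator; combined with the void-probability identity this reproduces exactly the exponent in \eqref{eq.PoutMaxSIR}. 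The main obstacle is the very first step: the thinning of $\bigcup_k \Phi_k$ by the events $\{\text{SIR}_x>\etat\}$ is not independent across APs, because all $I_x$ share the same interferer point process, so the Poisson void probability is not automatic. I would justify it either by invoking a Blaszczyszyn--Keeler-type result establishing the Poisson structure of the SIR process in a Rayleigh-faded PPP, or, in the regime $\etat \geq 1$ where $|\Xi(\etat)|\in\{0,1\}$, by arguing that $\Pr[\Xi(\etat)=\emptyset]=1-\EE|\Xi(\etat)|$ coincides with $\exp(-\EE|\Xi(\etat)|)$ to leading order; deciding which interpretation underpins the lemma is the principal subtlety of the proof.
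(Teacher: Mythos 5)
Your proposal is correct and takes essentially the same route as the paper: your void-probability/Campbell computation of $\exp(-\EE|\Xi(\etat)|)$ is precisely the paper's PGFL factorization $\Pout = \prod_{k}\Pout^{(k)}$ with each factor $\exp\bigl(-2\pi\lambda_k\int_0^\infty \EE_{I_{\Omega_j}}[\exp(-\etat u^\alpha I_{\Omega_j}/Q_k)]\,u\,\mathrm{d}u\bigr)$, followed by the identical Rayleigh/Laplace-transform step over the superposed user processes and the same Gaussian integral yielding $\lambda_k Q_k^{2/\alpha}/\bigl(\etat^{2/\alpha}C(0,\alpha)\sum_i \mu_i Q_i^{2/\alpha}\bigr)$ per tier. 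The dependence issue you single out as the principal subtlety is resolved in the paper simply by assuming independence of the SIR events at its step $(a)$, with a footnote pointing to the $\etat>1$ restriction of \cite{DhiGanBacAnd_JSAC12} and the exact joint-distribution treatment of \cite{Muk_JSAC12} — i.e., exactly the two justifications you anticipate.
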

\begin{proof}
For  mobile nodes applying the maximum instantaneous SIR association policy, the outage probability is given by
\begin{align} \nonumber \label{eq.PoutmaxSIR}
&\Pout(\etat, \boldsymbol{\lambda}, \boldsymbol{\mu}, \mathbf{Q}) = \Pr \lbrace \underset{k \in \mathcal{K},x_{k,i} \in \Phi_k }{\max} \text{SIR}_j(u \to x_{k,i}) < \etat \rbrace \\ \nonumber
& =  \underset{k \in \mathcal{K},x_{k,i} \in \Phi_k }{\bigcap} \Pr\lbrace\text{SIR}_j(u \to x_{k,i}) < \etat \rbrace \\ \nonumber
 &\overset{(a)}{=}  \bigcap_{k \in \mathcal{K}} \EE_{\Phi_k} \left \lbrace \prod_{x_{k,i} \in \Phi_k} \PP^{!u} \left( \frac{Q_k h_u u^{-\alpha}}{I_{\Omega_j}} < \etat \right) \right \rbrace \\
 & = \prod_{k \in \mathcal{K}} \Pout^{(k)}(\etat) \, ,
\end{align}
where $\Pout^{(k)}$ is the outage probability of a typical user connected to the AP of tier $k$ that yields the maximum instantaneous SIR. Note that in $(a)$ we assume that all $\text{SIR}_j(u\to x_{k,i})$ are independent.\footnote{In \cite{DhiGanBacAnd_JSAC12}, the authors limit the range of the SIR threshold to $\etat > 1$ to avoid the dependence, while an exact approach is proposed in \cite{Muk_JSAC12} where the joint SINR distribution is presented.} We can further develop $\Pout^{(k)}$ as
\begin{align} \nonumber \label{eq.PoutTierk}
&\Pout^{(k)}(\etat, \lambda_k, \boldsymbol{\mu}, \mathbf{Q}) \\ \nonumber
&= \EE_{\Phi_k} \left[ \prod_{x_{k,i} \in \Phi_k} 1 - \EE_{I_\Omega}\left[ \exp \left(-\etat \frac{u^{\alpha}}{Q_k} I_{\Omega_j} \right)\right] \right] \\ \nonumber
& = \exp\left(- 2 \pi \lambda_k \int_0^\infty \EE_{I_{\Omega_j}}\left[ \exp\left(-\etat \frac{u^\alpha}{Q_k} I_{\Omega_j} \right)\right] u \mathrm{d}u \right)\\ \nonumber
& = \exp \Big(-2 \pi \lambda_k \\ \nonumber
&\hspace{1 em} \times \int_0^\infty \exp\left(-\pi \etat^{2/\alpha} C(0, \alpha) u^2 \sum_{i = 1}^K \mu_i \left( \frac{Q_i}{Q_k}\right)^{2/\alpha} \right) u \mathrm{d}u \Big) \\
& =  \exp\left(\frac{-\lambda_k }{\etat^{2/\alpha}C(0, \alpha) \sum_{i=1}^{K}\mu_i (Q_i/Q_k)^{2/\alpha}}\right) .
\end{align}
Combining \eqref{eq.PoutmaxSIR}  and \eqref{eq.PoutTierk}, the proof is concluded.
\end{proof}
From Lemma \ref{lem.PoutMultiMax}, we can conclude that the density of the superposition of network nodes with different transmit powers is equal to a weighted sum of the densities $\tilde{\mu}_k = \sum_{i=1}^{K}\mu_i (Q_i/Q_k)^{2/\alpha}$. 
\begin{lemma}
The success probability for UL transmissions with SIC is given by
\begin{align} \nonumber
&\Ps^{\mathrm{SIC}}(\etat, \boldsymbol{\lambda}, \boldsymbol{\mu}, \mathbf{Q}) = 1 - \Pout(\etat, \boldsymbol{\lambda}, \boldsymbol{\mu}, \mathbf{Q}) \\
& \times \prod_{k \in \mathcal{K}} \left \lbrace \exp\left( - 2 \pi \lambda_k \int_0^\infty \PP_{\mathrm{s,SIC}}^{\mathrm{gain}}(\gamma, N|u)u \mathrm{d}u \right) \right \rbrace
\end{align}
where
\begin{align} \nonumber
&\PP_{\mathrm{s,SIC}}^{\mathrm{gain}}(\gamma, N|u) = \\ \nonumber
&\sum_{i=1}^{N} \left(\prod_{n=0}^{i-1} \left[1 - \exp(-\pi \etat^{2/\alpha} C(R_{\mathrm{I},n}^2/\etat^{2/\alpha}u^2, \alpha)\tilde{\mu}_j u^2)\right]\right) \\ \nonumber 
& \times \left( \prod_{n=1}^i \PP_{\mathrm{s, can}}(\etat, n) \right)\\
& \times \exp(-\pi \etat^{2/\alpha}C(R_{\text{I},i}^2/\etat^{2/\alpha}u^2, \alpha)\tilde{\mu}_j u^2) \,.
\end{align}
\end{lemma}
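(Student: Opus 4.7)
The plan is to mirror the derivation of Lemma~\ref{lem.PoutMultiMax} but to replace each per-AP no-SIC outage indicator by its SIC-enhanced counterpart. Writing the outage event under the max-instantaneous-SIR policy as the intersection over all APs of all tiers and invoking the same SIR-independence assumption used in step~$(a)$ of \eqref{eq.PoutmaxSIR}, I would first obtain
\begin{equation*}
1 - \Ps^{\mathrm{SIC}} \;=\; \prod_{k \in \mathcal{K}} \EE_{\Phi_k}\!\left[\, \prod_{x_{k,i} \in \Phi_k}\!\bigl(1 - p_k^{\mathrm{SIC}}(\|u - x_{k,i}\|)\bigr)\right],
\end{equation*}
where $p_k^{\mathrm{SIC}}(u)$ is the conditional success probability at a single tier-$k$ AP located at distance $u$ from the user when up to $N$ cancellations are permitted.

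Next I would derive $p_k^{\mathrm{SIC}}(u)$ directly from the mutually exclusive chain of events in \eqref{eq.event}, conditioning on $u$ and on the equivalent interferer density $\tilde{\mu}_k = \sum_i \mu_i (Q_i/Q_k)^{2/\alpha}$ seen by a tier-$k$ AP (the same density that surfaced in \eqref{eq.PoutTierk}). Repeating the PGFL/Laplace-transform step of \eqref{eq.PsicConDisLap}--\eqref{eq.PsicCond} while keeping the conditioning on $u$ in place and substituting $R_\mathrm{I}(n)\approx R_{\mathrm{I},n}$ gives the ``decode after $n$ cancellations'' factor $\exp\!\bigl(-\pi\etat^{2/\alpha}C(R_{\mathrm{I},n}^2/\etat^{2/\alpha}u^2,\alpha)\,\tilde{\mu}_k u^2\bigr)$. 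The ``cancel the $n$-th interferer'' factors $\Pscan(\etat,n)$ are imported verbatim from Lemma~\ref{lemma.Pscan}, since the scale invariance noted after \eqref{eq.Pscan_ave} makes them independent of both $u$ and the interferer density. Summing the $N+1$ disjoint outcomes of \eqref{eq.event} then decomposes $p_k^{\mathrm{SIC}}(u)$ into the $i=0$ no-SIC term (the same conditional success that underpins $\Pout^{(k)}$ in \eqref{eq.PoutTierk}) plus exactly the stated $\PP_{\mathrm{s,SIC}}^{\mathrm{gain}}(\etat,N\,|\,u)$.

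I would then apply the PPP PGFL tier-by-tier and split the resulting exponent additively, using $p_k^{\mathrm{SIC}}(u) = \Psic(\etat,0\,|\,u) + \PP_{\mathrm{s,SIC}}^{\mathrm{gain}}(\etat,N\,|\,u)$, to arrive at
\begin{equation*}
1 - \Ps^{\mathrm{SIC}} \;=\; \underbrace{\prod_k \Pout^{(k)}}_{=\,\Pout}\; \times\; \prod_k \exp\!\left(-2\pi\lambda_k \!\int_0^\infty\! \PP_{\mathrm{s,SIC}}^{\mathrm{gain}}(\etat,N\,|\,u)\,u\,\mathrm{d}u\right),
\end{equation*}
so that rearranging $\Ps^{\mathrm{SIC}} = 1 - (\cdot)$ delivers the claim.

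The principal obstacle is justifying the multiplicative form of $p_k^{\mathrm{SIC}}(u)$, which relies on the same two approximations that already underlie Theorem~\ref{theo.Pssic}: the deterministic-radius substitution $R_\mathrm{I}(n)\approx\sqrt{n/(\tilde{\mu}_k\pi)}$ from Lemma~\ref{lemma.Psic}, and the distance-dominated ordering from Lemma~\ref{lemma.Pscan}. Both must be invoked in tandem so that, conditionally on $u$, the cancellation factors and the residual decoding factor decouple as written; without them one would need the joint law of the ordered post-SIC interferers, which was precisely the obstruction avoided in the single-tier analysis. A secondary caveat is the SIR-independence across APs inherited from the footnote after \eqref{eq.PoutmaxSIR}, which I would carry over unchanged and flag explicitly, since it is what permits the outer product over $\Phi_k$ in the very first step.
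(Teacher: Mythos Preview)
Your proposal is correct and follows essentially the same route as the paper: write the SIC outage as the product over tiers of the PGFL applied to $1-p_k^{\mathrm{SIC}}(u)$, expand $p_k^{\mathrm{SIC}}(u)$ via the conditional version of \eqref{eq.event} into the no-SIC term plus the gain sum, and split the exponent so that the no-SIC piece reproduces $\Pout$ from \eqref{eq.PoutTierk} while the remainder is the stated product of exponentials. The paper additionally remarks on the translation invariance of the cancellation radius (so that the radial integral is well defined even though SIC is performed at each AP rather than at the origin), but otherwise your decomposition, approximations, and caveats match.
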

\begin{proof}
When a typical user is associated to the AP corresponding to the highest instantaneous SIR, we can write 
\be \label{eq.PoutHelp1}
\Pout^{\text{SIC}}(\etat, \boldsymbol{\lambda}, \boldsymbol{\mu}, \mathbf{Q}) = \bigcap_{k \in \mathcal{K}} \EE_{\Phi_k} \left \lbrace  \prod_{x_{k,i} \in \Phi_k} \PP\left(\frac{Q_k h_u u^{-\alpha}}{I_{\Omega_j^{n}}}  < \etat \right) \right \rbrace
\ee 
where $\PP({Q_k h_u u^{-\alpha}}/I_{\Omega_j^{n}}<\etat)$ is the outage probability conditioned on the distance $u$ and $n$ interferers being canceled. From \eqref{eq.PoutHelp1}, we can rewrite as
\begin{align} \label{eq.PoutSIC} \nonumber
&\Pout^{\text{SIC}}(\etat, \boldsymbol{\lambda}, \boldsymbol{\mu}, \mathbf{Q}) = \\
& \prod_{k \in \mathcal{K}} \left \lbrace \exp\left( - 2 \pi \lambda_k \int_0^\infty \underbrace{\EE_{I_{\Omega_j^{n}}}[\exp(-\etat u^{-\alpha} I_{\Omega_j^{n}}/Q_k)]}_{\PP_{\text{s,SIC}}^{(k)}(\etat, n|u)}u \mathrm{d}u \right) \right \rbrace
\end{align}
where the typical user is assumed to be at the origin and the APs at a distance $u$ from the origin. Still, the interference cancellation radius for each AP is considered with respect to the AP under consideration, which does not induce any problems since we consider motion-invariant point processes. Therefore, the success probability $\PP_{\text{s,SIC}}^{(j)}(\etat, n|u)$ referenced to tier $j$ is invariant under translation. The success probability including SIC conditioned on the distance can further be written as 
\begin{align} \nonumber \label{eq.PsSICInsConU}
&\PP_{\text{s,SIC}}^{(j)}(\etat, n|u) \\ \nonumber
&=\PP_\s(\etat|u) + \sum_{i=1}^{N} \left(\prod_{n = 0}^{i-1}(1-\PP_{\text{s, IC}}(\etat, R_{\mathrm{I},n}|u))\right) \\ \nonumber
& \hspace{1em} \times \left( \prod_{n=1}^{i} \PP_{\mathrm{s, can}}(\etat, n) \right) \PP_{\mathrm{s, IC}}(\etat, R_{\mathrm{I}, i}|u) \\ \nonumber
&= \exp(-\pi \etat^{2/\alpha}C(0,\alpha) \tilde{\mu}_j u^2)  \\ \nonumber
&+ \sum_{i=1}^{N} \left(\prod_{n=0}^{i-1} \left[1 - \exp(-\pi \etat^{2/\alpha} C(R_{\mathrm{I},n}^2/\etat^{2/\alpha}u^2, \alpha)\tilde{\mu}_j u^2)\right]\right)\\ 
&\times  \left( \prod_{n=1}^i \PP_{\mathrm{s, can}}(\etat, n) \right)\exp(-\pi \etat^{2/\alpha}C(R_{\text{I},i}^2/\etat^{2/\alpha}u^2, \alpha)\tilde{\mu}_j u^2) .
\end{align}
Note that in \eqref{eq.PsSICInsConU} the single-tier equivalent network referenced to the transmission power of tier $j$ is used. Substituting \eqref{eq.PsSICInsConU} in \eqref{eq.PoutSIC}, the proof is concluded. 
\end{proof}
Considering network infrastructure with SIC capabilities, the typical user will connect to the AP that offers the highest instantaneous SIR after canceling $n$ interferers. The maximum instantaneous SIR association policy that accommodates for SIC, does not necessarily connect the mobile node to the AP which yields the maximum average SIR nor to the closest one.  

\subsection{Range expansion}
While the higher tiers in a multi-tier network are intended to offload data traffic from the macrocell network, this target is impeded considerably due to the relatively small coverage area of the higher tiers, which are usually denoted as small cells. To encourage users to connect to the small cells, range expansion has been proposed which applies an association policy based on a biased received signal power \cite{LopGuvRocKouQueZha_WComMag11}. Although range expansion mitigates the UL cross-tier interference, users in DL experience bad signal conditions in the range expanded areas since they are not connected to the base station that provides the highest average SIR. It is therefore meaningful to study the benefit of SIC in DL for those users located in the range expanded areas (REAs). To calculate the success probability of the users belonging to the REA, we need to define the distance distribution of these users with respect to the serving AP. The following lemma is an extension of \cite{TanPenHonXue_WCL13} for a $K$-tier network. \begin{lemma}
Let $B = \{b_k\}$ be the set of biases corresponding to each tier. The distance distribution of users located in the REA to the serving AP is given by
\begin{multline}
f_{X_k^{\mathrm{(RE)}}}(x) =  \frac{2 \pi \lambda_k}{\pare} x \Bigg[ \exp \left( -\pi \sum_{i \in \mathcal{K}} \lambda_i \left( \frac{P_i}{P_k} \frac{b_i}{b_k} \right)^{2/\alpha} x^2 \right)\\
 - \exp \left( -\pi \sum_{i \in \mathcal{K}} \lambda_i \left( \frac{P_i}{P_k} \right)^{2/\alpha} x^2\right)\Bigg]
\end{multline}
where the association probability to the REA of tier $k$ is $\pare = 1 - \sum_{i \neq k} p_{\mathrm{a}, i}(b) - p_{\mathrm{a}, k}(B\,|\,b_k = 1)$ and the association probability to the $k$-th tier is given by
\be
p_{\mathrm{a},k}(B) = \frac{\lambda_k}{\sum_{i \in \mathcal{K}}\lambda_i \left( \frac{P_i}{P_k}\frac{b_i}{b_k}\right)^{2/\alpha}} .
\ee
\end{lemma}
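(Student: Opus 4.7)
The plan is to characterize the range expanded area (REA) of tier $k$ as a set difference between two association events and then compute the joint distribution of the distance to the nearest tier-$k$ AP and the REA membership. Formally, define $A_k^{\mathrm{b}}$ as the event that the user associates to tier $k$ under the biased rule (i.e.\ $P_k b_k X_k^{-\alpha} \geq P_i b_i X_i^{-\alpha}$ for all $i \neq k$) and $A_k^{\mathrm{nb}}$ as the analogous event with all biases set to one. Since range expansion is meaningful only when the bias favors tier $k$, one has $A_k^{\mathrm{nb}} \subseteq A_k^{\mathrm{b}}$, and the REA is simply $\mathrm{REA}_k = A_k^{\mathrm{b}} \setminus A_k^{\mathrm{nb}}$; this observation is what makes the final expression the difference of two exponentials.

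Next, I would condition on the nearest-tier-$k$ distance $X_k = x$. The event $A_k^{\mathrm{b}}$ requires $X_i \geq x (P_i b_i / P_k b_k)^{1/\alpha}$ for every $i \neq k$, and by independence of the PPPs across tiers together with the void probability of a two-dimensional PPP,
\begin{equation*}
\Pr\bigl\{A_k^{\mathrm{b}} \,\big|\, X_k = x\bigr\} \;=\; \exp\!\left(-\pi x^2 \sum_{i \neq k} \lambda_i (P_i b_i / P_k b_k)^{2/\alpha}\right),
\end{equation*}
with the analogous formula for $A_k^{\mathrm{nb}}$ obtained by setting all $b_i = 1$. Multiplying each of these by the Rayleigh-like density $f_{X_k}(x) = 2\pi \lambda_k x \exp(-\pi \lambda_k x^2)$ and pulling the tier-$k$ factor into the sum (the $i=k$ term contributes $(P_k b_k / P_k b_k)^{2/\alpha} = 1$) produces the joint densities of $(X_k, A_k^{\mathrm{b}})$ and $(X_k, A_k^{\mathrm{nb}})$ in which the exponents range over the whole index set $\mathcal{K}$.

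Finally, because $A_k^{\mathrm{nb}} \subseteq A_k^{\mathrm{b}}$, the joint density of $(X_k, \mathrm{REA}_k)$ is the difference of the two joint densities, and dividing by the normalizing constant $\pare$ yields the conditional PDF stated in the lemma. For the normalization I would integrate each joint density over $x \in [0,\infty)$, recognize the resulting closed forms as $p_{\mathrm{a},k}(B)$ and $p_{\mathrm{a},k}(B\,|\,b_k=1)$ respectively, and invoke $\sum_{i} p_{\mathrm{a},i}(B) = 1$ to rewrite $p_{\mathrm{a},k}(B) - p_{\mathrm{a},k}(B\,|\,b_k=1)$ in the form $1 - \sum_{i \neq k} p_{\mathrm{a},i}(B) - p_{\mathrm{a},k}(B\,|\,b_k=1)$ given in the statement. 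The only conceptual obstacle is correctly identifying $\mathrm{REA}_k$ as a clean set difference rather than an intersection whose probability would require inclusion–exclusion; once this is handled, the remainder is a direct application of the Poisson void probability together with the standard nearest-neighbor distance distribution already invoked in the derivations of $\Psic$ and $p_{\mathrm{a},k}$.
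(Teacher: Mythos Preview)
Your proposal is correct and is in fact more detailed than what the paper offers: the paper does not prove this lemma at all but simply cites \cite{JoSanXiaAnd_TWC12} and \cite{TanPenHonXue_WCL13}. The argument you outline---identifying the REA as the set difference $A_k^{\mathrm{b}}\setminus A_k^{\mathrm{nb}}$, conditioning on the nearest tier-$k$ distance, applying the PPP void probability tier by tier, and then normalizing---is exactly the standard derivation found in those references, so there is nothing to add.
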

For the proof, we refer to \cite{JoSanXiaAnd_TWC12} and \cite{TanPenHonXue_WCL13}. Note that a user, which belongs without biasing to tier $i \neq k$, is located in  the REA $C_k^{\mathrm{(RE)}}$ of tier $k$ if the relationship $P_k x_k^{-\alpha} < P_i x_i^{-\alpha} < b_k P_k x_k^{-\alpha}$ holds. 
In order to calculate the benefit of canceling the strongest interferer for the users located in the REA, we provide the following lemma. 
\begin{lemma} \label{lem.PsRE}
After canceling the strongest AP, the success probability of the users located in $C_k^{\mathrm{(RE)}}$  is given by 
\begin{multline} \label{eq.PsicRE1}
\Psic(\etat,1 \,|\, x_k \in \borRE) = \frac{1}{\pare} \\
\times \Bigg[\frac{1}{\sum_{i \in \mathcal{K}} \left(\frac{\lambda_i}{\lambda_k} \right) \left(\frac{P_i}{P_k} \right)^{2/\alpha} \left(\etat^{2/\alpha} C((1/\etat)^{2/\alpha}, \alpha)+\left(\frac{b_i}{b_k} \right)^{2/\alpha}\right) } \\
 - \frac{1}{\sum_{i \in \mathcal{K}}  \left(\frac{\lambda_i}{\lambda_k} \right) \left(\frac{P_i}{P_k} \right)^{2/\alpha} \left(\etat^{2/\alpha} C((1/\etat)^{2/\alpha}, \alpha)+1\right)} \Bigg] .
\end{multline}
\end{lemma}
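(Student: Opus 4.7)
The plan is to decondition the post-cancellation success probability over the distance $x$ from the reference user to its serving AP in tier $k$, weighted by the density $f_{X_k^{\mathrm{(RE)}}}(x)$ just derived, so that the two exponential terms of that density generate the two summands of \eqref{eq.PsicRE1}.

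First I would write
\[
\Psic(\etat, 1 \mid x_k \in \borRE) = \int_0^{\infty} \Psic(\etat, 1 \mid x_k = x)\, f_{X_k^{\mathrm{(RE)}}}(x)\, \mathrm{d}x,
\]
and handle the conditional factor by the PGFL machinery of Lemma \ref{lemma.Psic} adapted to the $K$-tier case. Following the distance-dominated order-statistics approximation used throughout Section \ref{sec.SIC}, I would model the APs of each tier $l$ that survive the cancellation as a PPP of intensity $\lambda_l$ thinned by a guard radius $x(P_l/P_k)^{1/\alpha}$ around the user; the rationale is that removing the dominant natural-tier interferer effectively promotes the reference user to the unbiased maximum-received-power regime with respect to tier $k$, under which every surviving AP of tier $l$ must satisfy $P_l r^{-\alpha} \leq P_k x^{-\alpha}$. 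Applying the PGFL tier by tier with Rayleigh fading, followed by the change of variables $w = r^2/(\etat^{2/\alpha} x^2 (P_l/P_k)^{2/\alpha})$, collapses the conditional success probability to
\[
\Psic(\etat, 1 \mid x_k = x) = \exp\!\left(-\pi \etat^{2/\alpha} C(\etat^{-2/\alpha}, \alpha)\, x^2 \sum_{l \in \mathcal{K}} \lambda_l (P_l/P_k)^{2/\alpha}\right).
\]

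Next I would split the density as $f_{X_k^{\mathrm{(RE)}}}(x) = (2\pi \lambda_k/\pare)\, x\, [A_1(x) - A_2(x)]$, with $A_1(x) = \exp\bigl(-\pi x^2 \sum_i \lambda_i (P_i b_i/(P_k b_k))^{2/\alpha}\bigr)$ and $A_2(x) = \exp\bigl(-\pi x^2 \sum_i \lambda_i (P_i/P_k)^{2/\alpha}\bigr)$. The two resulting integrals are both of the elementary form $\int_0^{\infty} 2\pi \lambda_k x \exp(-\pi x^2 S_j)\, \mathrm{d}x = \lambda_k/S_j$, where $S_1 = \sum_i \lambda_i (P_i/P_k)^{2/\alpha}\bigl[\etat^{2/\alpha} C(\etat^{-2/\alpha}, \alpha) + (b_i/b_k)^{2/\alpha}\bigr]$ and $S_2$ has the same form with $(b_i/b_k)^{2/\alpha}$ replaced by $1$. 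Dividing numerators and denominators by $\lambda_k$ and pulling out the overall factor $1/\pare$ then reproduces \eqref{eq.PsicRE1}.

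The hard part is justifying the choice of effective exclusion radius $x(P_l/P_k)^{1/\alpha}$ in the conditional step. Strictly, a user in $\borRE$ may have several APs from tiers $i \neq k$ with unbiased received power exceeding $P_k x^{-\alpha}$, and canceling only the single strongest does not, in general, empty the disk of that radius for every tier. The expression therefore rests on the same distance-dominated order-statistics approximation invoked in Lemmas \ref{lemma.Psic}--\ref{lemma.Pscan}; it is expected to be tight precisely in the regime where one dominant natural-tier AP is what placed the user in the REA, i.e.\ the operationally interesting case for quantifying the SIC benefit of range expansion.
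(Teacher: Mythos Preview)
Your proposal is correct and follows essentially the same route as the paper: compute the conditional success probability via the PGFL with a per-tier exclusion radius $x(P_l/P_k)^{1/\alpha}$ after cancellation, then decondition against the two-term REA distance density $f_{X_k^{\mathrm{(RE)}}}(x)$ to obtain the two Gaussian-type integrals that yield \eqref{eq.PsicRE1}. The paper's presentation differs only cosmetically---it first writes out the no-SIC expression with exclusion radius $(P_i/b_kP_k)^{1/\alpha}x_k$ and then argues in one sentence that canceling the strongest AP enlarges this radius to $(P_i/P_k)^{1/\alpha}x_k$, whereas you go directly to the post-cancellation case; your explicit caveat about the single-dominant-interferer approximation is in fact more candid than the paper's treatment of the same step.
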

\begin{proof}
See Appendix \ref{app.lemPsRE}
\end{proof}
From Lemma \ref{lem.PsRE}, the bias factors of the different tiers can be determined to guarantee a given performance for the mobile users belonging to the REA. 

\section{Numerical Results} \label{sec.NumRes}
In this section, we present some numerical results that illustrate that SIC, although not very effective in networks applying the association policy based on the maximum average received signal power, can have distinct advantages in scenarios with other association policies. Figure \ref{fig.rate_minLoad_SIC} depicts the coverage probability and compares the max-SIR association policy with the minimum load association policy. From the numerical results, we see that the coverage probability decreases significantly for the minimum load policy when no SIC is applied. This means that the loss in SIR cannot be compensated by the lower load of the AP. However, when SIC is applied ($n = 1$), the coverage probability is comparable with that of the max-SIR policy. From this figure, we conclude that when SIC is applied, users can be offloaded to nearby APs without loss of capacity, which paves the way for more advanced load balancing techniques.
\begin{figure}[t!]
\centering
\psfrag{lu = 5e-5 - closest BS}[][][0.5]{$\mu_j = 5e^{-5}$ - closest BS}
\psfrag{lu = 5e-5 - min load BS}[][][0.5]{$\,\mu_j = 5e^{-5}$ - min load BS}
\psfrag{lu = 5e-5 - min load BS SIC}[][][0.5]{$\,\,\mu_j = 5e^{-5}$ - min load BS SIC}
\psfrag{lu = 2e-4 - closest BS}[][][0.5]{$\mu_j = 5e^{-5}$ - closest BS}
\psfrag{lu = 2e-4 - min load BS}[][][0.5]{$\,\mu_j = 5e^{-5}$ - min load BS}
\psfrag{lu = 2e-4 - min load BS SIC}[][][0.5]{$\,\,\mu_j = 5e^{-5}$ - min load BS SIC}
\includegraphics[width = 1 \linewidth] {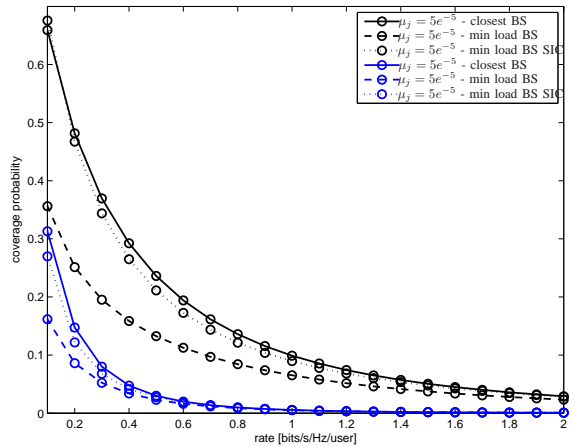}
\caption{The coverage probability is depicted for the max-SIR association policy (solid lines), the minimum load association policy (dashed lines), and the minimum load policy with SIC (dotted lines) for $\lambda = 10^{-5}$ and $\alpha = 4$.} 
\label{fig.rate_minLoad_SIC}
\end{figure}

Figure \ref{fig.Ps_maxSIR} shows the success probability for a typical user that connects to the AP that provides the highest instantaneous SIR, with and without SIC. We considered a two-tier network with densities $\lambda_1 = 10^{-5} \text{m}^{-2}$ and $\lambda_2 = 10^{-4} \text{m}^{-2}$, respectively, while the user density is given by $\mu = 10^{-4} \text{m}^{-2}$. The ratio between transmission powers is given by $P_1/P_2 = 10$ and $Q_1/Q_2 = 10$. From the figure, we observe an increase of the success probability up to $20\%$ in the range of interest, i.e. for values of the threshold above $0\, \text{dB}$. However, this scheme requires additional signaling between the APs to connect the user to the AP which provides instantaneously the highest SIR, which limits the practical feasibility.
\begin{figure}[t!]
\centering
\psfrag{Ps with SIC}[][][0.75]{$\PP_{\mathrm{s, SIC}}$}
\psfrag{Threshold (dB)}[][][0.75]{$\etat$ [dB]}
\psfrag{SIC - k = 0}[][][0.5]{SIC - $k = 0$}
\psfrag{SIC - k = 1}[][][0.5]{SIC - $k = 1$}
\psfrag{SIC - k = 2}[][][0.5]{SIC - $k = 2$}
\psfrag{SIC - k = 3}[][][0.5]{SIC - $k = 3$}
\includegraphics[width = 1 \linewidth] {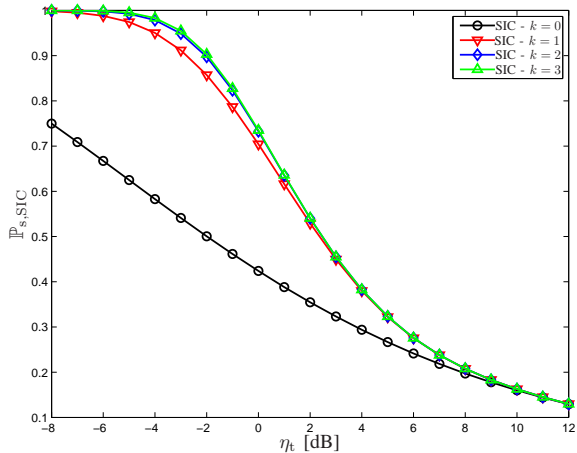}
\caption{The gain in success probability that can be achieved by successively canceling interferers.}
\label{fig.Ps_maxSIR}
\end{figure}

Figure \ref{fig.Ps_SIC_RE} depicts the success probability of a typical user in the REA in a two-tier network with densities $\lambda_1 = 10^{-5} \text{m}^{-2}$ and $\lambda_2 = 10^{-4} \text{m}^{-2}$ for different values of the range expansion factor $b$. The figure illustrates how the success probability decreases as the REA gets larger with increasing values of $b$. Moreover, from the numerical results we observe that the increase of success probability due to SIC is substantial. This scenario is a realistic example where SIC can provide high performance gain. 
\begin{figure}[t!]
\centering
\psfrag{coverage probability}[][][0.75]{success probability}
\psfrag{threshold (dB)}[][][0.75]{$\etat$ [dB]}
\psfrag{b = 2}[][][0.6]{$b=2$}
\psfrag{b = 5}[][][0.6]{$b=5$}
\psfrag{b = 10}[][][0.6]{$b=10$}
\includegraphics[width = 1 \linewidth] {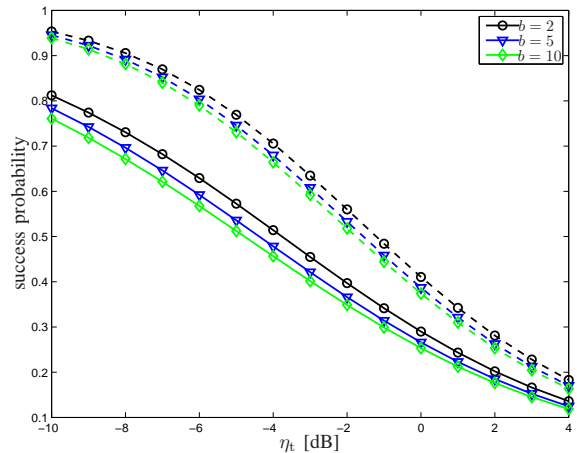}
\caption{Success probability for users belonging to the range expansion region. Solid lines represent the success probability without interference cancellation, whereas the dashes lines represent the success probability when the closest AP is canceled. The power ratio $P_1/P_2 = 10$.}
\label{fig.Ps_SIC_RE}
\end{figure}

\section{Conclusions} \label{sec.Con}
In this work, we developed a probabilistic framework for the performance analysis of multi-tier heterogeneous networks with SIC capabilities. The framework accounts for the consecutive steps of the IC scheme, network topology, propagation effects, and the association policy. For users connected to the AP that provides the maximum average received signal power, performance benefits diminish quickly with the number of cancellations $n$ and are modest for realistic values of the SIR. Yet, we presented several deployment scenarios for future multi-tier networks where SIC yields distinct performance gains, such as for the minimum load association policy, the maximum instantaneous SIR policy, and range expansion. The proposed framework can be applied to define the achievable performance of multi-tier networks with SIC capabilities. Future work to extend the current framework will include imperfect interference cancellation and will model the decreasing cost of APs of higher tiers considering for each tier a different maximum number of interferer cancellations. 

\section{Acknowledgment}
This work was partly supported by the SRG ISTD 2012037, CAS Fellowship for Young International Scientists Grant 2011Y2GA02,  SUTD-MIT International Design Centre under Grant IDSF1200106OH, the A$^\ast$STAR Research Attachment Programme, and the European Commission Marie Curie International Outgoing Fellowship under Grant 2010-272923.

\appendices
\section{Proof of Lemma \ref{lem.Par}} \label{app.lemPar}
We calculate the distribution of $Y = h X^{-\alpha}$. For the uniformly distributed distance variable $X$, the distribution is given by $F_X(x) = x^2/R^2$ such that $R$ is the maximum considered range where nodes still have a contribution to the aggregate interference. The distribution of $W = X^{-\alpha}$ is given by
\begin{align} \nonumber
F_W(w) &= \PP[X^{-\alpha}<w] \\ \nonumber
&= \PP[X \geq (1/w)^{1/\alpha}] \\ \nonumber
&= 1 - \frac{w^{-2/\alpha}}{R^2},
\end{align}
which corresponds to the CDF of a Pareto distribution. The distribution of the product of $W$ and the exponential r.v. $h$ is now given by
\begin{align}\nonumber
F_Y(y) &= \EE\left\lbrace \PP[W \leq y/h]\right\rbrace \\ \nonumber
& = \int_0^\infty \left(1 - \frac{(y/h)^{-2/\alpha}}{R^2} \right)\exp(-h)\mathrm{d}h \\ \nonumber
&= 1 - \Gamma\left( \frac{2}{\alpha}+1\right) \frac{y^{-2/\alpha}}{R^2},
\end{align}
which concludes the proof.

\section{Proof of Lemma \ref{lem.PscanTSD}} \label{app.lemPscanTSD}
In \eqref{eq.TSD}, $\alpha_I$ is chosen equal to the characteristic exponent of the skewed stable distribution in the unbounded path loss model. The parameters of the TSD can be found using the method of the cumulants. From \eqref{eq.TSD}, the cumulants of the truncated stable distribution can be expressed as
\bea \nonumber \label{eq.kappa1}
\kappa_I(k) &=& \frac{1}{j^k}\frac{d^k}{d\omega^k} \ln \psi_{I_{\Omega_j^n}}(j\omega)\Big|_{\omega = 0 } \\
&=& (-1)^k \gamma'\Gamma(-\alpha_I) g^{\alpha_I-k} \Pi_{i=0}^{k-1}(\alpha_I -i) .  
\eea
Building on Campbell's theorem \cite{RabQueShiWin_JSAC2011}, the cumulants of the aggregate interference can be expressed as
\be \label{eq.kappa2}
\kappa(k) = Q^k\frac{2 \pi \muj}{k\alpha-2} d_{\mathrm{min}}^{2-k\alpha} \mu_{h^2}(k) .
\ee
Using \eqref{eq.kappa1} and \eqref{eq.kappa2}, the parameters $\gamma'$ and $g$ can be written as a function of the first two cumulants as follows
\bea \nonumber
\gamma' &=& \frac{-\kappa(1)}{\Gamma(-\alpha_I)\alpha_I\left( \frac{\kappa(1)(1-\alpha_I)}{\kappa(2)} \right)^{\alpha_I-1}} \\
g &=& \frac{\kappa(1)(1-\alpha_I)}{\kappa(2)} .
\eea
The success probability of successfully canceling the strongest interferer can be written as
\begin{align} \nonumber
\PP_{\s, \text{can}}(\etat, n | r)  & = \PP\left( \frac{X_{(n)}}{I_{\Omega_j^n}} \geq \etat \right) \\ 
\nonumber& = \mathcal{L}_{I_{\Omega_j^n}}(\etat r^\alpha) \\ 
& = \exp\left( \gamma' \Gamma(-\alpha_I)[(g+\etat r^\alpha)^{\alpha_I} - g^{\alpha_I} ] \right) .
\end{align}
For the special case where $\alpha = 4$, we get $\alpha_I = 2/\alpha = 1/2$, and $\Pscan(\etat, n | r)$ conditioned on the distance of the strongest node is given by
\begin{align} \nonumber
&\Pscan(\etat, n|r, \alpha = 4) \\ \nonumber
&= \exp\left( \gamma' \Gamma(-1/2)  \sqrt{g} \left[ \sqrt{1 + \frac{\etat}{g}r^4} -1 \right] \right) \\ \nonumber
 & = \exp\left( -2 \kappa(1) \sqrt{\frac{\kappa(1)}{2\kappa(2)}}  \sqrt{\frac{\kappa(1)}{2\kappa(2)}}  \left[\sqrt{1 + \frac{2\etat r^4 \kappa(2)}{\kappa(1)}} -1\right] \right) \\ \nonumber
 & = \exp\left(- \frac{\kappa(1)^2}{\kappa(2)} \left[\sqrt{1 + \frac{2\etat r^4 \kappa(2)}{\kappa(1)}} -1\right] \right) \\ 
 & \overset{(a)}{=} \exp(-3/2 \muj \pi r^2 [\sqrt{1 + 4 \etat/3}-1])\, ,
\end{align}
where $(a)$ follows from the fact that $d_{\text{min}}$ corresponds to the distance $r$ of the $n$-th interferer, and we assume Rayleigh fading such that $\EE[h^k] = k!/\lambda^k$ with $\lambda = 1$ the intensity of the exponential distribution. The unconditional success probability can now be written as
\begin{align} \nonumber
&\Pscan(\etat, n|\alpha=4) \\ \nonumber
&= \int_0^\infty \exp(-3/2 \muj \pi r^2 [\sqrt{1 + 4 \etat/3}-1]) \\ \nonumber
& \hspace{13em} \times \exp(\muj \pi r^2) \frac{2(\muj \pi r^2)^n}{r \Gamma(n)} \mathrm{d}r \\ \nonumber
& = \int_0^\infty \exp(\muj \pi r^2[\sqrt{9/4 + 3\etat} - 3/2 + 1]) \frac{2 (\muj \pi)^n r^{2n-1}}{\Gamma(n)} \mathrm{d}r \\ \nonumber
& = \int_0^\infty \exp(\muj \pi r^2[\sqrt{9/4 + 3\etat} - 1/2]) \frac{(\muj \pi)^{n-1} r^{2n-2}}{\Gamma(n)}\mathrm{d}\muj \pi r^2 \\ \nonumber
& = \int_0^\infty \exp([\sqrt{9/4 + 3\etat} - 1/2]w) \frac{w^{n-1}}{\Gamma(n)} \mathrm{d}w \\
& = \frac{1}{(\sqrt{9/4+3\etat}-1/2)^n} \quad .
\end{align}

\section{Proof of Lemma \ref{lem.PsRE}} \label{app.lemPsRE}
The success probability of a mobile node belonging to $C_k^{(\text{RE})}$ and connected to the $k$-th tier conditioned on the distance can be written as
\be \nonumber
\PP_\s(\etat \, | \, x_k \in \borRE, x_k) = \prod_{i \in \mathcal{K}} \mathcal{L}_{I_{\Phi_i}}\left(\frac{\etat x_k^\alpha}{P_k} \right),
\ee
where 
\begin{multline} \nonumber
 \mathcal{L}_{I_{\Phi_i}}\left(\frac{\etat x_k^\alpha}{P_k} \right) \\
 = \exp \left( -\pi \lambda_i \etat^{2/\alpha}\left(\frac{P_i}{P_k} \right)^{2/\alpha} C((b_i/\etat b_k)^{2/\alpha},\alpha)  x_k^2\right) \quad. 
\end{multline}
The integration interval of the integral in $C(b,\alpha)$ is determined noting that the location of the user in the REA $P_i x_i^{-\alpha} < b_k P_k x_k^{-\alpha}$ yields the interferer exclusion region $x_i > (P_i/b_kP_k)^{1/\alpha} x_k$. Applying the change of variables $(P_k/\etat P_i)^{2/\alpha}(x_i/x_k)^2 \to u$ and deconditioning  on $x_k$, we can write
\begin{align} \label{eq.Ps_RE} \nonumber
&\PP_\s(\etat\,|\,x_k \in \borRE) \\ \nonumber
& = \int_0^\infty \exp(-\pi \etat^{2/\alpha} \sum_{i\in \mathcal{K}} \lambda_i (P_i/P_k)^{2/\alpha} C((b_i/\etat b_k)^{2/\alpha},\alpha)x_k^2) \\ \nonumber
& \hspace{1 em} \times f_{X_k^{\mathrm{(RE)}}}(x_k) \mathrm{d}x_k \\ \nonumber
& = \frac{1}{\pare} \\ \nonumber
&\times \Bigg(\frac{1}{\sum_{i \in \mathcal{K}} \left(\frac{\lambda_i}{\lambda_k} \right) \left(\frac{P_i}{P_k} \right)^{2/\alpha} \left(\etat^{2/\alpha} C((b_i/\etat b_k)^{2/\alpha}, \alpha)+\left(\frac{b_i}{b_k} \right)^{2/\alpha}\right) } \\
 & - \frac{1}{\sum_{i \in \mathcal{K}}  \left(\frac{\lambda_i}{\lambda_k} \right) \left(\frac{P_i}{P_k} \right)^{2/\alpha} \left(\etat^{2/\alpha} C((b_i/\etat b_k)^{2/\alpha}, \alpha)+1\right)} \Bigg) \, .
\end{align}
Applying SIC to a user located in the REA, the highest unbiased received signal power of tier $i$ is canceled. As a result the interference cancellation radius relative to the $i$-th tier increases from $(P_i/b_kP_k)^{1/\alpha} x_k$ to $(P_i/P_k)^{1/\alpha} x_k$, and hence, the success probability after canceling the strongest AP can be written as \eqref{eq.PsicRE1}.


\bibliographystyle{IEEEtran}
\bibliography{./IEEEabrv,./StringDefinitions,./multiTierInterference}

\begin{thebibliography}{10}
\providecommand{\url}[1]{#1}
\csname url@samestyle\endcsname
\providecommand{\newblock}{\relax}
\providecommand{\bibinfo}[2]{#2}
\providecommand{\BIBentrySTDinterwordspacing}{\spaceskip=0pt\relax}
\providecommand{\BIBentryALTinterwordstretchfactor}{4}
\providecommand{\BIBentryALTinterwordspacing}{\spaceskip=\fontdimen2\font plus
\BIBentryALTinterwordstretchfactor\fontdimen3\font minus
  \fontdimen4\font\relax}
\providecommand{\BIBforeignlanguage}[2]{{%
\expandafter\ifx\csname l@#1\endcsname\relax
\typeout{** WARNING: IEEEtran.bst: No hyphenation pattern has been}%
\typeout{** loaded for the language `#1'. Using the pattern for}%
\typeout{** the default language instead.}%
\else
\language=\csname l@#1\endcsname
\fi
#2}}
\providecommand{\BIBdecl}{\relax}
\BIBdecl

\bibitem{WilQueKouSlu_SPAWC13}
M.~Wildemeersch, T.~Q.~S. Quek, M.~Kountouris, and C.~H. Slump, ``Successive
  interference cancellation in uplink cellular networks,'' in \emph{Proc.
  {IEEE} Workshop on Signal Proc. Advances in Wireless Commun.}, Darmstadt,
  Germany, Jun. 2013, pp. 315--319.

\bibitem{QueRocGuvKou_Cam13}
T.~Q.~S. Quek, G.~de~la Roche, I.~Guvenc, and M.~Kountouris, \emph{Small Cell
  Networks: Deployment, PHY Techniques, and Resource Allocation}.\hskip 1em
  plus 0.5em minus 0.4em\relax Cambridge University Press, 2013.

\bibitem{WilQueRabSlu_TC2013}
M.~Wildemeersch, T.~Q.~S. Quek, C.~H. Slump, and A.~Rabbachin, ``Cognitive
  small cell networks: Energy efficiency and trade-offs,'' \emph{{IEEE} Trans.
  Commun.}, vol.~61, no.~9, pp. 4016--4029, Sep. 2013.

\bibitem{CheQueKou_JSAC2012}
W.~C. Cheung, T.~Q.~S. Quek, and M.~Kountouris, ``Throughput optimization,
  spectrum allocation, and access control in two-tier femtocell networks,''
  \emph{{IEEE} J. Sel. Areas Commun.}, vol.~30, no.~3, pp. 561--574, Apr. 2012.

\bibitem{LopGuvRocKouQueZha_WComMag11}
D.~Lopez-Perez, I.~Guvenc, G.~de~la Roche, M.~Kountouris, T.~Q.~S. Quek, and
  J.~Zhang, ``Enhanced intercell interference coordination challenges in
  heterogeneous networks,'' \emph{IEEE Wireless Commun. Mag.}, vol.~18, no.~3,
  pp. 22--30, Jun. 2011.

\bibitem{KouAnd_TWC2012}
M.~Kountouris and J.~G. Andrews, ``Downlink {SDMA} with limited feedback in
  interference-limited wireless networks,'' \emph{{IEEE} Trans. Wireless
  Commun.}, vol.~11, no.~8, pp. 2730--2741, Aug. 2012.

\bibitem{verdu_1998}
S.~Verdu, \emph{Multiuser Detection}.\hskip 1em plus 0.5em minus 0.4em\relax
  Cambridge University Press, 1998.

\bibitem{WebAndYanVec_TIT2007}
S.~P. Weber, J.~G. Andrews, X.~Yang, and G.~De~Veciana, ``Transmission capacity
  of wireless ad hoc networks with successive interference cancellation,''
  \emph{{IEEE} Trans. Inf. Theory}, vol.~53, no.~8, pp. 2799--2814, Aug. 2007.

\bibitem{WebAnd_NOW2012}
S.~Weber and J.~G. Andrews, \emph{Transmission capacity of wireless networks},
  NOW, Ed.\hskip 1em plus 0.5em minus 0.4em\relax Foundations and Trends in
  Networking, 2012.

\bibitem{HuaLauChe_JSAC2009}
K.~Huang, V.~K.~N. Lau, and Y.~Chen, ``Spectrum sharing between cellular and
  mobile ad hoc networks: transmission-capacity trade-off,'' \emph{{IEEE} J.
  Sel. Areas Commun.}, vol.~27, no.~7, pp. 1256--1267, Sep. 2009.

\bibitem{MorLoy_JSAC2009}
V.~Mordachev and S.~Loyka, ``On node density-outage probability tradeoff in
  wireless networks,'' \emph{{IEEE} J. Sel. Areas Commun.}, vol.~27, no.~7, pp.
  1120--1131, Sep. 2009.

\bibitem{HasAloBasEbb_TWC2003}
M.~O. Hasna, M.~S. Alouini, A.~Bastami, and E.~S. Ebbini, ``Performance
  analysis of cellular mobile systems with successive co-channel interference
  cancellation,'' \emph{{IEEE} Trans. Wireless Commun.}, vol.~2, no.~1, pp.
  29--40, Jan. 2003.

\bibitem{BloJin_ICC2009}
J.~Blomer and N.~Jindal, ``Transmission capacity of wireless ad hoc networks:
  Successive interference cancellation vs. joint detection,'' in \emph{Proc.
  IEEE Int. Conf. on Commun.}, Dresden, Germany, Jun. 2009, pp. 1--5.

\bibitem{NguJeoQueTayShi_TWC13}
T.~M. Nguyen, Y.~Jeong, T.~Q.~S. Quek, W.~P. Tay, and H.~Shin, ``{Interference
  alignment in a Poisson field of MIMO femtocells},'' \emph{{IEEE} Trans.
  Wireless Commun.}, vol.~12, no.~6, pp. 2633--2645, Jun. 2013.

\bibitem{Ina_TSP2012}
H.~Inaltekin, ``Gaussian approximation for the wireless multi-access
  interference distribution,'' \emph{{IEEE} Trans. Signal Process.}, vol.~60,
  no.~11, pp. 6114--6120, Nov. 2012.

\bibitem{WenLoyYon_JSAC12}
Y.~Wen, S.~Loyka, and A.~Yongacoglu, ``Asymptotic analysis of interference in
  cognitive radio networks,'' \emph{{IEEE} J. Sel. Areas Commun.}, vol.~30,
  no.~10, pp. 2040--2052, Nov. 2012.

\bibitem{BenCarGag_TC2010}
O.~Ben Sik~Ali, C.~Cardinal, and F.~Gagnon, ``On the performance of
  interference cancellation in wireless ad hoc networks,'' \emph{{IEEE} Trans.
  Commun.}, vol.~58, no.~2, pp. 433--437, Feb. 2010.

\bibitem{ZhaHae_Globecom2012}
X.~Zhang and M.~Haenggi, ``The performance of successive interference
  cancellation in random wireless networks,'' in \emph{Proc. IEEE Global
  Telecomm. Conf.}, Anaheim, CA, Dec. 2012, pp. 2317--2321.

\bibitem{ZhaHae_Asilomar13}
------, ``On decoding the $k$th strongest user in {Poisson} networks with
  arbitrary fading distribution,'' in \emph{Proc. Asilomar Conf. on Signals,
  Systems and Computers}, Pacific Grove, CA, Nov. 2013, pp. 1--5.

\bibitem{ZhaHae_ISIT13}
------, ``The aggregate throughput in random wireless networks with successive
  interference cancellation,'' in \emph{Proc. IEEE Int. Symp. on Inform.
  Theory}, Istanbul, Turkey, Jul. 2013, pp. 1--5.

\bibitem{KeeBlaKar_ISIT13}
H.~P. Keeler, B.~Blaszczyszyn, and M.~K. Karray, ``{SINR}-based $k$-coverage
  probability in cellular networks with arbitrary shadowing,'' in \emph{Proc.
  IEEE Int. Symp. on Inform. Theory}, Istanbul, Turkey, Jul. 2013, pp. 1--5.

\bibitem{YeRonCheAlsCarAnd_TWC13}
Q.~Ye, B.~Rong, Y.~Chen, M.~Al-Shalash, C.~Caramanis, and J.~G. Andrews, ``User
  association for load balancing in heterogeneous cellular networks,''
  \emph{{IEEE} Trans. Wireless Commun.}, vol.~12, no.~6, pp. 2706--2716, Jun.
  2013.

\bibitem{DhiGanBacAnd_JSAC12}
H.~S. Dhillon, R.~K. Ganti, F.~Baccelli, and J.~G. Andrews, ``Modeling and
  analysis of {K}-tier downlink heterogeneous cellular networks,'' \emph{{IEEE}
  J. Sel. Areas Commun.}, vol.~30, no.~3, pp. 550--560, Apr. 2012.

\bibitem{SinDhiAnd_TWC13}
S.~Singh, H.~S. Dhillon, and J.~G. Andrews, ``Offloading in heterogeneous
  networks: Modeling, analysis, and design insights,'' \emph{{IEEE} Trans.
  Wireless Commun.}, vol.~12, no.~5, pp. 2484--2497, May 2013.

\bibitem{AndBacGan_TC2011}
J.~G. Andrews, F.~Baccelli, and R.~K. Ganti, ``A tractable approach to coverage
  and rate in cellular networks,'' \emph{{IEEE} Trans. Commun.}, vol.~59,
  no.~11, pp. 3122--3134, Nov. 2011.

\bibitem{BlaKarKee_INFOCOM13}
B.~Blaszczyszyn, M.~K. Karray, and H.-P. Keeler, ``Using {Poisson} processes to
  model lattice cellular networks,'' in \emph{Proc. IEEE Joint Conf. of the
  IEEE {C}omputer and {C}ommun. Societies}, Turin, Italy, Apr. 2013, pp.
  797--805.

\bibitem{DhiNovAnd_Globecom2012}
H.~S. Dhillon, T.~D. Novlan, and J.~G. Andrews, ``Coverage probability of
  uplink cellular networks,'' in \emph{Proc. IEEE Global Telecomm. Conf.},
  Anaheim, USA, Dec. 2012, pp. 1--6.

\bibitem{BlaKee_PIMRC13}
B.~Blaszczyszyn and H.~P. Keeler, ``Equivalence and comparison of heterogeneous
  cellular networks,'' in \emph{Proc. IEEE Int. Symp. on Personal, Indoor and
  Mobile Radio Commun.}, London, UK, Sep. 2013.

\bibitem{WinPinShe_ProcIEEE2009}
M.~Z. Win, P.~C. Pinto, and L.~A. Shepp, ``A mathematical theory of network
  interference and its applications,'' \emph{Proc. {IEEE}}, vol.~97, no.~2, pp.
  205--230, Feb. 2009.

\bibitem{kingman}
J.~F.~C. Kingman, \emph{Poisson processes}.\hskip 1em plus 0.5em minus
  0.4em\relax Oxford University Press, 1993.

\bibitem{DavNag_OrdSta2003}
H.~A. David and H.~N. Nagaraja, \emph{Order statistics}.\hskip 1em plus 0.5em
  minus 0.4em\relax Wiley Online Library, 2003.

\bibitem{Hae_TIT2005}
M.~Haenggi, ``On distances in uniformly random networks,'' \emph{{IEEE} Trans.
  Inf. Theory}, vol.~51, no.~10, pp. 3584--3586, Oct. 2005.

\bibitem{CsoHaeMas_AdvAppMath88}
S.~Csorgo, E.~Haeusler, and D.~M. Mason, ``A probabilistic approach to the
  asymptotic distribution of sums of independent, identically distributed
  random variables,'' \emph{Advances in applied mathematics}, vol.~9, no.~3,
  pp. 259--333, 1988.

\bibitem{csorgo1988}
S.~Csorgo, ``Limit theorems for sums of order statistics,'' in \emph{Sixth Int.
  Summer School in Probability Theory and Mathematical Statistics}, Varna,
  Bulgaria, 1988, pp. 5--37.

\bibitem{CliSam_StoPro94}
D.~B.~H. Cline and G.~Samorodnitsky, ``Subexponentiality of the product of
  independent random variables,'' \emph{Stochastic Processes and their
  Applications}, vol.~49, no.~1, pp. 75--98, 1994.

\bibitem{HaeGan_NOW2009}
M.~Haenggi and R.~K. Ganti, \emph{Interference in Large Wireless
  Networks}.\hskip 1em plus 0.5em minus 0.4em\relax Hanover, MA, USA: Now
  Publishers Inc., 2009, vol.~3, no.~2.

\bibitem{InaChiaPooWic_JSAC09}
H.~Inaltekin, M.~Chiang, H.~V. Poor, and S.~B. Wicker, ``On unbounded path-loss
  models: effects of singularity on wireless network performance,''
  \emph{{IEEE} J. Sel. Areas Commun.}, vol.~27, no.~7, pp. 1078--1092, Sep.
  2009.

\bibitem{RabQueShiWin_JSAC2011}
A.~Rabbachin, T.~Q.~S. Quek, H.~Shin, and M.~Z. Win, ``Cognitive network
  interference,'' \emph{{IEEE} J. Sel. Areas Commun.}, vol.~29, no.~2, pp.
  480--493, Feb. 2011.

\bibitem{RabConWin_Globecom2011}
A.~Rabbachin, A.~Conti, and M.~Z. Win, ``Intentional network interference for
  denial of wireless eavesdropping,'' in \emph{Proc. IEEE Global Telecomm.
  Conf.}, Houston, Texas, Dec. 2011, pp. 4268--4273.

\bibitem{FerNed_PhyA}
J.-S. Ferenc and Z.~N{\'e}da, ``On the size distribution of {Poisson Voronoi}
  cells,'' \emph{Physica A: Statistical Mechanics and its Applications}, vol.
  385, no.~2, pp. 518--526, Apr. 2007.

\bibitem{Hae_TIT13}
M.~Haenggi, ``The local delay in {Poisson} networks,'' \emph{{IEEE} Trans. Inf.
  Theory}, vol.~59, no.~3, pp. 1788--1802, Mar. 2013.

\bibitem{Muk_JSAC12}
S.~Mukherjee, ``Distribution of downlink {SINR} in heterogeneous cellular
  networks,'' \emph{{IEEE} J. Sel. Areas Commun.}, vol.~30, no.~3, pp.
  575--585, Apr. 2012.

\bibitem{TanPenHonXue_WCL13}
H.~Tang, J.~Peng, P.~Hong, and K.~Xue, ``Offloading performance of range
  expansion in picocell networks: A stochastic geometry analysis,'' \emph{IEEE
  Wireless Commun. Lett.}, pp. 1--4, 2013, accepted for publication.

\bibitem{JoSanXiaAnd_TWC12}
H.-S. Jo, Y.~J. Sang, P.~Xia, and J.~G. Andrews, ``Heterogeneous cellular
  networks with flexible cell association: A comprehensive downlink {SINR}
  analysis,'' \emph{{IEEE} Trans. Wireless Commun.}, vol.~11, no.~10, pp.
  3484--3495, Oct. 2012.

\end{thebibliography}
\end{document}